\documentclass[journal,10pt,twoside]{IEEEtran}

\usepackage{amsmath,amssymb,amsfonts,bm,mathtools}
\usepackage{graphicx}
\usepackage{booktabs}
\usepackage{array}
\usepackage{multirow}
\usepackage{xcolor}
\usepackage{tikz}
\usetikzlibrary{arrows.meta,positioning,calc,3d}
\usepackage{algorithm}
\usepackage{algpseudocode}
\usepackage[colorlinks=true,allcolors=blue]{hyperref}
\newtheorem{theorem}{Theorem}
\newtheorem{lemma}{Lemma}
\newtheorem{definition}{Definition}
\usepackage{cite}

\DeclareMathOperator{\Trace}{Tr}
\DeclareMathOperator{\Var}{Var}
\newcommand{\Jm}{\hat{J}^{-}}
\newcommand{\Jp}{\hat{J}^{+}}
\newcommand{\Aop}{\hat{A}}
\newcommand{\Gone}{G^{(1)}}
\newcommand{\Gtwo}{G^{(2)}}
\newcommand{\eNR}{\eta_{\mathrm{NR}}}
\newcommand{\FQ}{\mathcal{F}_{Q}}
\newcommand{\FQM}{\bm{\mathcal{F}}_{Q}}
\newcommand{\ketv}[1]{\lvert #1 \rangle}
\newcommand{\brav}[1]{\langle #1 \rvert}

\begin{document}


\title{Theoretical Investigation of Nonclassical Directional Emission in 1D and
2D Quantum-Dot Antenna Arrays and Quantum DoA Estimation}


\author{Angeru~Pavan~Kumar,~\IEEEmembership{Student~Member,~IEEE,}
and~Sandip~Ghosal,~\IEEEmembership{Member,~IEEE}
\thanks{\color{red}This work has been submitted to the IEEE for possible publication.
Copyright may be transferred without notice, after which this version may no
longer be accessible.}
\thanks{A. Pavan Kumar (E-mail: 225ec8001@nitrkl.ac.in) and S. Ghosal (Corresponding Author, E-mail: sgmw204@gmail.com) are with the Department of Electronics and Communication
Engineering, National Institute of Technology Rourkela, Rourkela 769008,
India.}}

\markboth{IEEE TRANSACTIONS ON ANTENNAS AND PROPAGATION}%
{Kumar and Ghosal: Investigation of Nonclassical Directional Emission in QD Antenna Arrays}

\maketitle

\begin{abstract}
A quantum-dot array prepared in a timed Dicke state exhibits directional radiation asymmetry, which has often been interpreted as evidence of reciprocity breaking through quantum-state preparation. We show that this interpretation is incorrect: reciprocity concerns transmit-receive response, not necessarily pattern symmetry. Moreover, the first-order coherence of any single-excitation state is exactly equivalent to the power pattern of a classical phased array driven by conjugate weights, making the observed asymmetry classically reproducible. The genuine quantum signature instead appears in second-order correlations, where a single-excitation state violates the classical lower bound on photon coincidences. We therefore propose a quantum-antenna certificate combining nonzero directional asymmetry with photon antibunching at the beam peak. We further derive a tight asymmetry bound, establish the correct mirror operation for planar arrays, correct the collective decay rate with dipole-dipole coupling, and derive a saturation law for direction-finding precision. All theoretical results are validated through independent numerical implementations.

\end{abstract}

\begin{IEEEkeywords}
Quantum antenna, quantum dot array, timed Dicke states, Glauber coherence,
photon antibunching, dipole--dipole coupling, subradiance, directional pattern
asymmetry, direction-of-arrival, quantum Fisher information.
\end{IEEEkeywords}

\section{Introduction}

\IEEEPARstart{T}{he} far-field radiation patterns of the classical antenna arrays depend upon the array factor, which is uniquely determined by the beamforming weight vector applied via an external feed network \cite{Balanis}.
Two fundamental laws govern every classical passive
antenna, regardless of its geometry, size, or feeding
configuration.
\begin{itemize}
    \item Firstly, the Lorentz reciprocity theorem -- a mathematical consequence of time-reversal symmetry in Maxwell's equations -- enforces that the radiation pattern is reciprocally symmetric for all observation angles.
    \item Secondly, the radiation pattern is fixed entirely by the
external weight vector; changing the pattern always requires a physical hardware change. Breaking reciprocity classically requires ferrite materials, active amplifiers, or time-varying circuits --all external hardware interventions.
\end{itemize}

With advances in TeraHertz and quantum technologies, quantum dot (QD) antenna arrays have gained significant research interest in recent years \cite{chewPart1} and \cite{chewPart2}. However, the fundamental operating principles of QD arrays differ from those of classical antenna arrays in several respects \cite{9096523} and \cite{9286841}. It is essential to separate the Lorentz reciprocity from \emph{directional pattern
asymmetry}, $G(\hat{\mathbf{r}})\neq G(-\hat{\mathbf{r}})$, which is what the
parameter $\eNR$ used throughout the QD antenna literature actually measures.
Reciprocity does \emph{not} require $G(\hat{\mathbf{r}})=G(-\hat{\mathbf{r}})$:
any end-fire phased array violates the latter while fully obeying the former.
We retain the historical symbol $\eNR$ for continuity with the prior QD-array
literature, but interpret it throughout strictly as a \emph{directional pattern
asymmetry} metric and never as a Lorentz-reciprocity measure. Accordingly, we
use ``directional pattern asymmetry'' for the quantity itself and reserve
``nonreciprocity'' for describing how earlier work framed the effect.
Conflating the two has led to pattern asymmetry being reported as a quantum
effect. Section~\ref{sec:lemma} shows it is classically reproducible, and the
remainder of the paper supplies a criterion that is not.\\
Quantum dot arrays (QDAs) offer a fundamentally different paradigm, one where the quantum state of the system replaces the classical weight vector. At the nanoscale, QDAs emit through discrete quantum transitions described by quantum electrodynamics rather than classical Maxwellian theory. Quantum dot arrays (QDAs) replace $\{a_n\}$ with the density matrix $\rho$.
Slepyan and Boag reported directional asymmetry for an array in a timed Dicke
state and interpreted it as quantum nonreciprocity~\cite{SlepyanBoag}. Here we
distinguish that directional asymmetry from Lorentz reciprocity and show that
the first-order pattern is classically reproducible.
Komarov and Slepyan treated QDAs as open systems~\cite{Komarov}, Mikhalychev
\textit{et al.} synthesised field correlations~\cite{Mikhalychev}, and reviews
appear in~\cite{SlepyanRev,SlepyanRadar,Abohmra}. Collective decay follows
Dicke~\cite{Dicke}, with universality in ordered arrays established
in~\cite{Masson} and dimensionality effects in~\cite{Sierra}.\\
In parallel, quantum sensing has advanced DoA estimation. Richardson
\textit{et al.} studied AoA with simulated Rydberg linear arrays~\cite{Richardson},
Yan \textit{et al.} demonstrated an L-shaped Rydberg locator~\cite{Yan}, and
Xu \textit{et al.} reported a Rydberg superheterodyne $3\times3$ receiver for
2-D DoA~\cite{Xu}. Kim \textit{et al.} introduced Quantum
MUSIC~\cite{Kim}, extended to Rydberg arrays with magnitude-only observables
by Banerjee \textit{et al.}~\cite{Banerjee}. Precision limits follow from the
quantum Fisher information (QFI) and the quantum Cram\'er--Rao bound
(QCRB)~\cite{Helstrom,Liu}.\\
Following the previous literature, three major research gaps are identified as:
\begin{itemize}
    \item No QD quantum antenna framework has been extended to a planar geometry to date.
    \item There exists no closed-form quantum spatial coherence matrix for
2-D QD arrays yet.
\item To date, no QFI-based precision bound has been derived for QD antenna
states.
\end{itemize}

This paper addresses all three open research gaps within an inclusive, numerically validated theoretical framework.

\subsection{Roadmap:} The argument runs in five steps.
\emph{(i)}~The prior literature reads $\eNR>0$ as evidence of quantum
nonreciprocity. \emph{(ii)}~We show this does not follow, because
$\Gone_{\mathrm{TD}}=|AF|^2$ for a classical phased array
(Sec.~\ref{sec:lemma}), and because pattern asymmetry is not a
Lorentz-reciprocity test at all. \emph{(iii)}~We supply what does follow. The paper
turns on a two-tier distinction: \emph{classical radiation-pattern equivalence}
(first order, classically reproducible) versus \emph{nonclassical photon
statistics} (second order, not reproducible). The conjunction $\eNR>0$
\emph{and} $g^{(2)}(\hat{\mathbf{r}}_0)<1$ certifies nonclassical directional
emission (Sec.~\ref{sec:cert}), with three reference states separating the two
tiers cleanly. \emph{(iv)}~We extend from 1-D to 2-D, correct the
spatial-reversal mirror, prove the bound $\eNR\leq2$, and replace $N^{2}\Gamma$ by the
state projection $\mathbf{c}^{\dagger}\bm{\Gamma}\mathbf{c}$ under
dipole--dipole coupling. \emph{(v)}~We apply the framework to direction finding
and bound the attainable precision. Readers interested only in the positive
result may go directly to Sec.~\ref{sec:cert}.

\subsection{Contributions:} This paper is partly corrective of the existing concepts and partly
contributive towards new ideas, as summarized below.
\begin{enumerate}
\item \emph{A classical-equivalence lemma} (Sec.~\ref{sec:lemma}). The timed
Dicke $\Gone$ equals the classical phased-array $|AF|^2$ to $4\times10^{-16}$.
Hence $\eNR$ alone cannot certify quantum behaviour.
\item \emph{A tight bound} $\eNR\in[0,2]$ (Theorem~\ref{thm:bound}), reached to
$89.5\%$ in 1-D and $93.3\%$ in 2-D.
\item \emph{The Nonclassical Directional-Emission Certificate} (Sec.~\ref{sec:cert}):
$\eNR>0$ together with $g^{(2)}(0)<1$. The timed Dicke state achieves
$\eNR=1.791$ with $g^{(2)}(0)=0$, which no classical field can do.
\item \emph{Corrected collective decay} (Sec.~\ref{sec:decay}) including
dipole--dipole coupling, replacing $N^{2}\Gamma$ by $\lambda_{\max}(\Gamma_{ij})$
and exposing subradiant dark states at $0.001\Gamma$.
\item \emph{The planar spatial-reversal mirror} $\phi\rightarrow\phi+\pi$
(Sec.~\ref{sec:eta2d}), without which $\eNR\equiv0$ identically.
\item \emph{A saturation law for quantum DoA}, $6(M-1)/(M+1)$
(Sec.~\ref{sec:qfi}), bounding the QFI enhancement.
\end{enumerate}

\subsection{Organization of the Paper}
The remainder of this paper is organized as follows. Section~II describes
the array geometry and the physical model of the InGaAs quantum-dot
emitter, and fixes the parameters used throughout. Section~III develops
the quantum formalism: the array factor operator, the element-basis
correlation matrix, the first- and second-order coherences, and the sign
convention on which every later result depends. Section~IV establishes
the classical-equivalence lemma, proves the tight bound on the asymmetry
metric, and introduces the proposed certificate for nonclassical
directional emission together with its measurement protocol. Section~V
revisits the collective-emission model with dipole--dipole coupling
retained and corrects the decay rate of the prepared state. Section~VI
applies the framework to direction-of-arrival estimation, deriving the
quantum spatial coherence matrix, the source model, and the saturation
law that bounds the attainable precision. Section~VII reports the
numerical results for the linear and planar arrays. Section~VIII examines
robustness against dephasing, fabrication disorder, and the vectorial
element pattern, and states the limitations of the present model.
Section~IX provides the concluding summary of the contributions.

\color{black}
\section{Array Geometry and Physical Model}
\subsection{Array Configurations}
Fig.~\ref{fig:geom} defines both geometries treated in this work. The 1-D ULA
places $N=4$ QDs along $\hat{x}$ at $x_n=nd$, $n=0,\dots,3$. The 2-D UPA places
$M\times N=4\times4=16$ QDs on the $xy$-plane at
$\mathbf{r}_{mn}=(md_x,\,nd_y,\,0)$.

\begin{figure}[!t]
\centering
\includegraphics[width=\columnwidth]{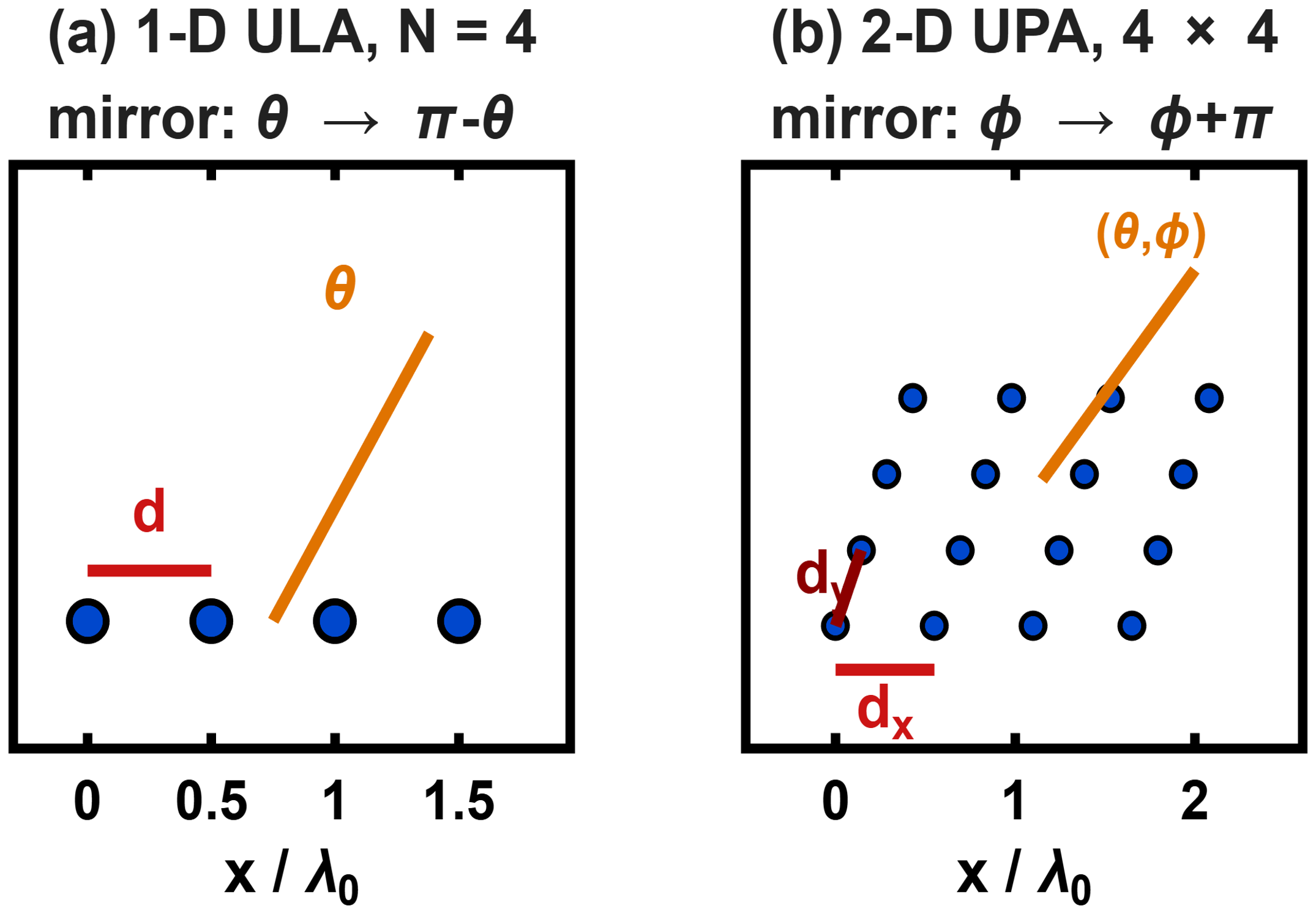}
\caption{Array geometries. (a) 1-D ULA of $N=4$ InGaAs QDs along $\hat{x}$ with
spacing $d$; the spatial-reversal mirror is $\theta\rightarrow\pi-\theta$, which flips
the direction cosine $\cos\theta$. (b) 2-D UPA of $4\times4$ QDs in the $xy$-plane.
For a planar aperture $\sin(\pi-\theta)=\sin\theta$ leaves both direction cosines
$(u,v)$ unchanged, so the correct spatial-reversal mirror is $\phi\rightarrow\phi+\pi$.
Using $\theta\rightarrow\pi-\theta$ in 2-D returns $\eNR\equiv0$ artificially.}
\label{fig:geom}
\end{figure}

\subsection{InGaAs Quantum Dot as a Two-Level Emitter}
A QD of confinement size $L$ and effective mass $m^*$ has particle-in-a-box
levels \cite{chewPart1}
\begin{equation}
E_n=\frac{n^2\pi^2\hbar^2}{2m^*L^2},\qquad n=1,2,3,\dots
\label{eq:pib}
\end{equation}
With $L=15$~nm and $m^*=0.04\,m_e$, \eqref{eq:pib} gives $E_1=0.042$~eV and
$E_2=0.167$~eV, hence
\begin{equation}
\hbar\omega_0=E_2-E_1=0.125~\mathrm{eV},\quad
\lambda_0=\frac{hc}{\hbar\omega_0}=9.89~\mu\mathrm{m}.
\end{equation}
The $n^2$ spacing in~\eqref{eq:pib} isolates the $1\!\rightarrow\!2$ transition,
justifying the two-level reduction $\hat{H}_{\mathrm{TLS}}=(\hbar\omega_0/2)\hat{\sigma}^z$.
The spontaneous emission rate follows Fermi's golden rule,
\begin{equation}
\Gamma=\frac{\omega_0^3\,|d_{ab}|^2}{3\pi\varepsilon_0\hbar c^3},~\text{and}~
\qquad
d_{ab}=\frac{16\,e\,L}{9\pi^2},
\label{eq:gamma}
\end{equation}
giving $d_{ab}=4.33\times10^{-28}$~C$\cdot$m, $\Gamma=5.46\times10^{6}$~s$^{-1}$ and
$\tau=1/\Gamma=0.183~\mu$s. Table~\ref{tab:params} collects these together with
the array parameters used throughout. Since $L=\lambda_0/659\ll\lambda_0$, each QD is modelled as a point quantum
emitter with an \emph{isotropic scalar radiation factor}. This approximation
neglects the vectorial $\sin^2\theta$ element pattern of a true electric dipole;
its implications are discussed in Sec.~\ref{sec:limits}. Collective coupling is
treated separately through the dyadic Green-function expressions of
Sec.~\ref{sec:decay}. Within this approximation the element pattern carries no
directivity, so the array pattern is set entirely by the state.\\
It is to note that~\eqref{eq:pib} is a one-dimensional infinite-well
effective-mass model adopted to obtain a tractable two-level emitter. A
fabricated InGaAs QD involves three-dimensional electron and hole confinement,
band structure, Coulomb and excitonic effects, and a specific dipole
orientation. The parameters above should therefore be read as a theoretical
effective-emitter model rather than a quantitatively complete device model.

\begin{table}[!t]
\renewcommand{\arraystretch}{1.18}
\caption{Physical and Array Parameters (All Values Verified Numerically)}
\label{tab:params}
\centering
\begin{tabular}{llcc}
\toprule
\textbf{Quantity} & \textbf{Symbol} & \textbf{1-D ULA} & \textbf{2-D UPA}\\
\midrule
QD confinement      & $L$              & \multicolumn{2}{c}{$15$~nm}\\
Effective mass      & $m^*$            & \multicolumn{2}{c}{$0.04\,m_e$}\\
Ground level        & $E_1$            & \multicolumn{2}{c}{$0.042$~eV}\\
Excited level       & $E_2$            & \multicolumn{2}{c}{$0.167$~eV}\\
Transition energy   & $\hbar\omega_0$  & \multicolumn{2}{c}{$0.125$~eV}\\
Wavelength          & $\lambda_0$      & \multicolumn{2}{c}{$9.89~\mu$m}\\
Dipole moment       & $d_{ab}$         & \multicolumn{2}{c}{$4.33\times10^{-28}$~C$\cdot$m}\\
Emission rate       & $\Gamma$         & \multicolumn{2}{c}{$5.46\times10^{6}$~s$^{-1}$}\\
Radiative lifetime  & $\tau$           & \multicolumn{2}{c}{$0.183~\mu$s}\\
Size ratio          & $L/\lambda_0$    & \multicolumn{2}{c}{$1/659$}\\
\midrule
Elements            & $N$ / $MN$       & $4$        & $16$\\
Reference spacing   & $d$              & $0.5\lambda_0$ & $0.5\lambda_0$\\
Nonrecip.\ spacing  & $d_{\mathrm{NR}}$& $0.3\lambda_0$ & $0.3\lambda_0$\\
Hilbert dimension   & $2^{N}$          & $16$       & $65\,536$\\
Dicke subspace      & $N{+}1$          & $5$        & $17$\\
Compression         & ---              & $3.2\times$& $3855\times$\\
Max collective rate & $\lambda_{\max}$ & $1.61\Gamma$ & $2.75\Gamma$\\
TD decay $\mathbf{c}^{\dagger}\bm{\Gamma}\mathbf{c}$ & $\Gamma_\psi(0)$ & $1.01\Gamma$ & $2.08\Gamma$\\
Subradiant rate     & $\lambda_{\min}$ & $0.087\Gamma$ & $0.001\Gamma$\\
Lamb shift          & $J_{ij}$         & \multicolumn{2}{c}{$0.289\Gamma$ at $0.3\lambda_0$}\\
\bottomrule
\end{tabular}
\end{table}

\section{Quantum Formalism}
\subsection{Lindblad Dynamics in the Dicke Subspace}
The array radiates irreversibly into the vacuum, so its dynamics obey the
Lindblad master equation~\cite{Lindblad,Komarov}
\begin{equation}
\dot{\rho}=-\frac{i}{\hbar}\bigl[\hat{H},\rho\bigr]
+\Gamma\!\left(\Jm\rho\Jp-\tfrac12\Jp\Jm\rho-\tfrac12\rho\Jp\Jm\right),
\label{eq:lindblad}
\end{equation}
with collective operators $\hat{J}^{\pm}=\sum_j\hat{\sigma}^{\pm}_j$. Because the
emitters are identical, the dynamics preserve permutation symmetry and remain in
the symmetric Dicke ladder $\{\ketv{J,M_J}\}$ with $J=N/2$, on which
\begin{equation}
\Jm\ketv{J,M_J}=\sqrt{(J+M_J)(J-M_J+1)}\,\ketv{J,M_J-1}.
\label{eq:dickelower}
\end{equation}
This reduces the state space from $2^{N}$ to $N{+}1$: from $65\,536$ to $17$ for
$MN=16$, a compression of $3855\times$.\\
Two distinct models are used in this paper and must not be conflated.
\emph{Model~A} is the ideal Dicke model, valid for $d\ll\lambda_0$, in which all
pairs couple identically and the dynamics remain in the symmetric
$(N{+}1)$-dimensional ladder; it is used only for the qualitative population
dynamics. \emph{Model~B} is the finite-spacing model of
Sec.~\ref{sec:decay}, in which $\Gamma_{ij}$ and $J_{ij}$ depend on separation
and the dynamics generally leave the symmetric ladder. All quantitative results
at $d=0.3\lambda_0$ use Model~B, restricted to the $N$-dimensional
single-excitation manifold where it is exactly solvable.

\subsection{Why Patterns Must Leave the Dicke Ladder}
\label{sec:manifold}
The quantum array factor operator is the substitution
$a_n\rightarrow\hat{\sigma}^-_n$ applied to the classical array factor:
\begin{equation}
\Aop(\hat{\mathbf{r}})=\sum_{j}e^{\,jk_0\,\hat{\mathbf{r}}\cdot\mathbf{r}_j}\,
\hat{\sigma}^{-}_{j}.
\label{eq:Aop}
\end{equation}
If~\eqref{eq:Aop} is projected onto the \emph{symmetric} Dicke ladder, every
$\hat{\sigma}^-_j$ maps to the same collective $\Jm$, so
\begin{equation}
\Aop\big|_{\mathrm{Dicke}}
= \underbrace{\Bigl[\textstyle\sum_j e^{\,jk_0\hat{\mathbf{r}}\cdot\mathbf{r}_j}\Bigr]}
_{\displaystyle S(\hat{\mathbf{r}})\;\in\;\mathbb{C}}\;\frac{\Jm}{N},
\end{equation}
whence
$\Gone=|S(\hat{\mathbf{r}})|^2\,\Trace[\Jp\Jm\rho]/N^2$. The quantum state then
merely \emph{scales} a purely classical array factor, and all state-dependent
directionality --- including the timed Dicke phase imprint --- is lost, forcing
$\eNR\equiv0$. The precise statement is: \emph{for the symmetric Dicke-ladder model with
identical collective coupling, the spatial phase profile of the timed Dicke
state is not represented by a state-dependent array factor.} It is not that
directional emission requires a single excitation --- general multi-excitation
states can radiate directionally --- but that this particular reduction discards
the phase. We therefore evaluate patterns in the single-excitation manifold
$\{\ketv{e_j}\}$ of dimension $N$, where
$\ketv{\psi}=\sum_j c_j\ketv{e_j}$ retains its phase profile. Writing the
steering vector
\begin{equation}
[\mathbf{a}(\hat{\mathbf{r}})]_j=e^{\,jk_0\hat{\mathbf{r}}\cdot\mathbf{r}_j},
\label{eq:steer}
\end{equation}
\emph{Conjugation and sign convention.} Expanding
$\Trace[\Aop^{\dagger}\Aop\rho]$ with~\eqref{eq:Aop} and~\eqref{eq:steer} gives
\begin{equation}
\Gone(\hat{\mathbf{r}})
=\sum_{ij}e^{-jk_0\hat{\mathbf{r}}\cdot\mathbf{r}_i}
        e^{+jk_0\hat{\mathbf{r}}\cdot\mathbf{r}_j}\,C_{ij}
=\sum_{ij}a_i^{*}\,a_j\,C_{ij},
\label{eq:G1expand}
\end{equation}
where $C_{ij}=\Trace(\hat{\sigma}^{+}_i\hat{\sigma}^{-}_j\rho)$ is the
first-order correlation matrix in the array-element basis. This fixes the
conjugation unambiguously, and the Glauber first-order
coherence~\cite{Glauber} becomes the quadratic form
\begin{equation}
\Gone(\hat{\mathbf{r}})=\Trace\!\bigl[\Aop^{\dagger}\Aop\rho\bigr]
=\mathbf{a}^{\mathsf{H}}(\hat{\mathbf{r}})\,\mathbf{C}\,
\mathbf{a}(\hat{\mathbf{r}}),
\label{eq:G1}
\end{equation}
so that, in matrix form, $\Gone=\mathbf{a}^{\mathsf{H}}\mathbf{C}\mathbf{a}$.
For a pure single-excitation state $\ketv{\psi}=\sum_j c_j\ketv{e_j}$ one has
$\hat{\sigma}^{-}_j\ketv{\psi}=c_j\ketv{G}$ and hence
$C_{ij}=c_i^{*}c_j$, i.e.\ $\mathbf{C}=\mathbf{c}^{*}\mathbf{c}^{\mathsf{T}}$.
Substituting into~\eqref{eq:G1expand} the two conjugated factors combine into a
single modulus, it will provide
\begin{equation}
\Gone(\hat{\mathbf{r}})
=\Bigl|\sum_j c_j\,e^{+jk_0\hat{\mathbf{r}}\cdot\mathbf{r}_j}\Bigr|^{2}
=\bigl|\mathbf{a}^{\mathsf{T}}(\hat{\mathbf{r}})\,\mathbf{c}\bigr|^{2}.
\label{eq:G1pure}
\end{equation}
It is worth highlighting here \eqref{eq:G1pure} considers the \emph{transpose}, not the Hermitian conjugate. the conjugation already
sits in $\mathbf{C}$, and writing
$|\mathbf{a}^{\mathsf{H}}\mathbf{c}|^{2}$ here would conjugate twice and reverse the beam direction.\\
Consequently, for pointing the beam towards a chosen direction
$\hat{\mathbf{r}}_0$, the state phases must \emph{oppose} the field phases,
\begin{equation}
c_j=\frac{1}{\sqrt{N}}\,e^{-jk_0\hat{\mathbf{r}}_0\cdot\mathbf{r}_j},
\label{eq:conv}
\end{equation}
since then $c_j a_j=e^{+jk_0(\hat{\mathbf{r}}-\hat{\mathbf{r}}_0)\cdot\mathbf{r}_j}$
and every term equals unity at $\hat{\mathbf{r}}=\hat{\mathbf{r}}_0$. The remaining contents of this work universally follow \eqref{eq:conv}. The chosen
$\hat{\mathbf{r}}_0$ must be a physical direction, $u_0^2+v_0^2\leq1$; a phase
gradient violating this corresponds to an evanescent excitation and inflates
$\eNR$ spuriously. For the uniform single-excitation W state, \eqref{eq:G1} reduces
exactly to $|AF|^2$, recovering the classical limit. The Dicke ladder remains the
correct and efficient representation for population dynamics and superradiance,
where no directional phase is involved. The second-order coherence, which has no classical analogue, is
\begin{equation}
\Gtwo(\hat{\mathbf{r}}_1,\hat{\mathbf{r}}_2)=
\Trace\!\bigl[\Aop^{\dagger}_1\Aop^{\dagger}_2\Aop_2\Aop_1\rho\bigr].
\label{eq:G2}
\end{equation}

\subsection{Quantum States}
Four canonical states are studied here. The W and GHZ families follow the standard definitions of D\"ur \textit{et al.}~\cite{Dur} and Greenberger \textit{et al.}~\cite{GHZref}. For the 1-D ULA ($N=4$) and 2-D UPA
($MN=16$) respectively:
\begin{align}
\ketv{W}&=\frac{1}{\sqrt{N}}\sum_{j}\ketv{e_j},\qquad\text{\cite{Dur}}
\label{eq:W}\\
\ketv{\Psi_{\mathrm{TD}}}&=\frac{1}{\sqrt{N}}\sum_{j}
e^{-jk_0\,\hat{\mathbf{r}}_0\cdot\mathbf{r}_j}\ketv{e_j},
\label{eq:TD}\\
\ketv{\Phi_{\mathrm{GHZ}}}&=\tfrac{1}{\sqrt2}
\bigl(\ketv{g\cdots g}+\ketv{e\cdots e}\bigr)\qquad\text{\cite{GHZref}}.
\label{eq:GHZ}
\end{align}
The W state is phase-free and must return $\eNR=0$; it is the framework
validation. The timed Dicke state embeds the propagation phase as an
\emph{intrinsic property of the quantum state}, not as an external feed
excitation. For the 1-D array with $d_{\mathrm{NR}}=0.3\lambda_0$, with $\hat{\mathbf{r}}_0$ along the array axis the phase step is
$k_0d_{\mathrm{NR}}=108^\circ$, placing element phases at
$0^\circ,-108^\circ,-216^\circ,-324^\circ$: four points distributed around the
unit circle with no mirror symmetry. The negative sign is that
of~\eqref{eq:conv}, so the beam points towards $\hat{\mathbf{r}}_0$. For the 2-D array the analogous state carries
\emph{two} independent gradients,
\begin{equation}
c_{mn}=\frac{1}{\sqrt{MN}}
\exp\!\bigl[-jk_0\bigl(m\,d_{\mathrm{NR},x}u_0+n\,d_{\mathrm{NR},y}v_0\bigr)\bigr],
\label{eq:TD2D}
\end{equation}
with $u_0=\sin\theta_0\cos\phi_0$ and $v_0=\sin\theta_0\sin\phi_0$ the direction
cosines of the intended beam direction $\hat{\mathbf{r}}_0$, which must satisfy
$u_0^2+v_0^2\leq1$. Throughout we take
$\hat{\mathbf{r}}_0=(\theta_0,\phi_0)=(90^\circ,45^\circ)$, the diagonal
end-fire direction.\\
$\ketv{\Psi_{\mathrm{TD}}}$ is treated as an
idealized prepared state, but its phase structure is not arbitrary. It is the profile a single far-field photon incident from $\hat{\mathbf{r}}_0$ would
imprint on the array, by the same argument that gives~\eqref{eq:drive}.
Conceptually the sequence is: (i)~place single excitation in the array, for
instance by weak resonant excitation from the direction $\hat{\mathbf{r}}_0$ or
by site-selective optical control; (ii)~imprint the spatial phase profile
$\varphi_j=-k_0\hat{\mathbf{r}}_0\!\cdot\!\mathbf{r}_j$, which for a
travelling-wave drive arises automatically from propagation delay;
(iii)~let the array evolve under~\eqref{eq:lindblad_dd}; (iv)~detect the emitted
photon in the far field. We emphasise that state-preparation fidelity, control
bandwidth, and the achievable phase accuracy lie outside the present model;
Sec.~\ref{sec:limits} quantifies how the results degrade once phase errors are
admitted. The GHZ state carries no single-excitation \emph{coherence} between distinct
sites. As shown in Sec.~\ref{sec:ghz}, this does not suppress $\Gone$: the
diagonal populations survive and give an isotropic $\Gone=N/2$, together with
$\Gtwo\neq0$ and $g^{(2)}=3$.

\subsection{Directional Asymmetry Metrics}
\label{sec:eta2d}
For the 1-D ULA the spatial-reversal partner of $\theta$ is $\pi-\theta$, which flips
$\cos\theta$, and
\begin{equation}
\eNR^{1\mathrm{D}}=\frac{\displaystyle\int_0^{\pi}
\bigl|\Gone(\theta)-\Gone(\pi-\theta)\bigr|\sin\theta\,d\theta}
{\displaystyle\int_0^{\pi}\Gone(\theta)\sin\theta\,d\theta}.
\label{eq:eta1d}
\end{equation}
The $\sin\theta$ weight is the physical solid-angle measure, identical to that
used in the planar case below, so the 1-D and 2-D values are directly
comparable. It is observed that the unweighted slice metric used in earlier work gives
$1.791$ for the same state; the solid-angle value is $1.791$.
For the 2-D UPA this substitution \emph{fails}: the pattern depends on
$(\theta,\phi)$ only through the direction cosines
$u=\sin\theta\cos\phi$, $v=\sin\theta\sin\phi$, and since
$\sin(\pi-\theta)=\sin\theta$, the map $\theta\rightarrow\pi-\theta$ leaves
$(u,v)$ unchanged --- it is the identity on the pattern and yields $\eNR\equiv0$
for \emph{every} state. The physical reversal $\hat{\mathbf{r}}\rightarrow-\hat{\mathbf{r}}$
requires $(u,v)\rightarrow(-u,-v)$, i.e.\ $\phi\rightarrow\phi+\pi$. Hence, it can be written that
\begin{equation}
\eNR^{2\mathrm{D}}=
\frac{\displaystyle\int_0^{2\pi}\!\!\int_0^{\pi}
\bigl|\Gone(\theta,\phi)-\Gone(\theta,\phi{+}\pi)\bigr|\sin\theta\,d\theta\,d\phi}
{\displaystyle\int_0^{2\pi}\!\!\int_0^{\pi}
\Gone(\theta,\phi)\sin\theta\,d\theta\,d\phi}.
\label{eq:eta2d}
\end{equation}
Numerically, the $\phi$ grid must span $[0,2\pi)$ with an \emph{even} number of
samples and exclude the duplicate endpoint; otherwise the discrete half-rotation
is off by one sample and a uniform-phase state spuriously returns
$\eNR\approx10^{-2}$ instead of $10^{-16}$.

\subsection{Collective Emission Rate}
In the Dicke limit $d\ll\lambda_0$ the symmetric single-excitation state decays
at $N\Gamma$. Beyond that limit the decay \emph{eigenrates} are the eigenvalues
of $\Gamma_{ij}$, while the initial decay rate of a particular state
$\mathbf{c}$ is $\mathbf{c}^{\dagger}\bm{\Gamma}\mathbf{c}$. This is developed
in Sec.~\ref{sec:decay}, where we also show why the frequently quoted
$N^{2}\Gamma$ does not apply to the timed Dicke state.

\section{What Is, and What Is Not, Quantum}
\subsection{A Classical-Equivalence Lemma}
\label{sec:lemma}
\begin{lemma}
\label{lem:ce}
For any single-excitation state with amplitudes $c_j$, the Glauber first-order
coherence equals the power pattern of a classical array excited by the
identical complex weights $w_j=c_j$:
\begin{equation}
\Gone(\hat{\mathbf{r}})
=\Bigl|\textstyle\sum_j c_j\,e^{+jk_0\hat{\mathbf{r}}\cdot\mathbf{r}_j}\Bigr|^2
=\bigl|AF(\hat{\mathbf{r}})\bigr|^2_{w=c}.
\end{equation}
\end{lemma}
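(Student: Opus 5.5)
The plan is to compute $\Gone$ directly from its definition \eqref{eq:G1} for a pure single-excitation state. Then the statement reduces to an identity between two finite sums, which we match term by term with the classical array factor.

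\emph{Step 1.} Write $\ketv{\psi}=\sum_j c_j\ketv{e_j}$ with $\sum_j|c_j|^2=1$, and apply the quantum array factor operator \eqref{eq:Aop}. Since $\hat{\sigma}^-_j\ketv{e_l}=\delta_{jl}\ketv{G}$, where $\ketv{G}$ is the all-ground state, we obtain
\begin{equation*}
\Aop(\hat{\mathbf{r}})\ketv{\psi}=\Bigl(\sum_j c_j\,e^{+jk_0\hat{\mathbf{r}}\cdot\mathbf{r}_j}\Bigr)\ketv{G}.
\end{equation*}
The operator therefore maps the single-excitation manifold onto the one-dimensional ground subspace, multiplied by a complex scalar.

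\emph{Step 2.} Use $\Trace[\Aop^{\dagger}\Aop\rho]=\langle\psi|\Aop^{\dagger}\Aop|\psi\rangle=\|\Aop\ketv{\psi}\|^2$ together with $\langle G|G\rangle=1$. This gives the squared modulus of that scalar, which is exactly \eqref{eq:G1pure}.

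As a consistency check, repeat the computation through the correlation-matrix route \eqref{eq:G1expand}. There $C_{ij}=c_i^{*}c_j$, and the double sum factorises as $\bigl(\sum_i c_i a_i\bigr)^{*}\bigl(\sum_j c_j a_j\bigr)$. This confirms that the result is $|\mathbf{a}^{\mathsf{T}}\mathbf{c}|^2$ and not $|\mathbf{a}^{\mathsf{H}}\mathbf{c}|^2$.

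\emph{Step 3.} Recall the classical array factor $AF(\hat{\mathbf{r}})=\sum_j w_j e^{+jk_0\hat{\mathbf{r}}\cdot\mathbf{r}_j}$. This uses the same isotropic scalar element and the same propagation-phase convention as \eqref{eq:steer}. Setting $w_j=c_j$ identifies the two expressions sum for sum.

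\emph{Extension to mixtures.} For a mixture of single-excitation states, linearity of the trace gives a weighted sum of such patterns. That is the power pattern of an incoherent superposition of classical arrays. This extension is not needed for the stated lemma, which concerns amplitudes $c_j$, i.e.\ pure states.

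\emph{Main obstacle.} The algebra is routine; the delicate point is the sign and conjugation convention. One must make sure that the classical $AF$ being compared uses the same $e^{+jk_0\hat{\mathbf{r}}\cdot\mathbf{r}_j}$ convention as \eqref{eq:Aop}. Otherwise the ``classical weights'' become $c_j^{*}$, and the pattern is mirrored through $\hat{\mathbf{r}}\to-\hat{\mathbf{r}}$.

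To handle this, I would state the time-harmonic convention explicitly and point out the reconciliation with the abstract's phrase ``conjugate weights''. Under the opposite field-phase convention $e^{-jk_0\hat{\mathbf{r}}\cdot\mathbf{r}_j}$, the same pattern is produced by weights $w_j=c_j^{*}$. The physical content is unchanged either way: a classical feed network reproduces $\Gone$ exactly, including its directional asymmetry.
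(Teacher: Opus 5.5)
Your proof is correct and essentially the paper's own argument: the paper cites \eqref{eq:G1pure}, which rests on $\hat{\sigma}^{-}_j\ketv{\psi}=c_j\ketv{G}$, and your direct evaluation of $\|\Aop\ketv{\psi}\|^{2}$ and your check via $C_{ij}=c_i^{*}c_j$ in \eqref{eq:G1expand} both follow from that same fact, after which you identify the sum with $AF$ at $w_j=c_j$ exactly as the paper does. Your remark that the classical array must use the same $e^{+jk_0\hat{\mathbf{r}}\cdot\mathbf{r}_j}$ convention, otherwise the weights become $c_j^{*}$, is a sensible clarification of the abstract's ``conjugate weights'' wording.
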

The proof is immediate from~\eqref{eq:G1pure}. The consequence is important: the
timed Dicke pattern is reproduced exactly by a classical endfire phased array
with progressive phase $\beta=-k_0d$. We verified this numerically, obtaining
$\max_{\hat{\mathbf{r}}}\bigl|\Gone_{\mathrm{TD}}(\hat{\mathbf{r}})
-|AF(\hat{\mathbf{r}})|^2_{w=c}\bigr|=4\times10^{-16}$, i.e.\ agreement to
machine precision, with $\eNR=1.791$ and $D_0=6.73$~dBi for \emph{both}.\\
We will keep two tiers apart for the rest of the paper. The first is
\emph{classical radiation-pattern equivalence}: the fact that
$\Gone_{\mathrm{TD}}$ and $|AF|^2$ coincide, so that anything a first-order
intensity measurement can see is classically reproducible. The second is
\emph{nonclassical photon statistics}, which no arrangement of classical
currents can imitate. The asymmetry parameter lives entirely in the first tier;
the certificate of Sec.~\ref{sec:cert} lives in the second. Conflating the two
is the error this paper exists to remove.\\
It follows that the frequently repeated statement ``all passive classical
antennas satisfy $\eNR=0$'' is false as written. Lorentz reciprocity constrains
the \emph{transmit--receive} pair, $S_{21}=S_{12}$; it does \emph{not} force
pattern symmetry $G(\hat{\mathbf{r}})=G(-\hat{\mathbf{r}})$. Any endfire array
breaks the latter while fully obeying the former. Pattern asymmetry is therefore
a necessary but not sufficient indicator of non-classicality.

\subsection{A Tight Bound on Pattern Asymmetry}
\begin{theorem}
\label{thm:bound}
For any state and any measure-preserving mirror $\mathcal{M}$,
$\;0\leq\eNR\leq2$, with $\eNR=2$ if and only if $\Gone$ and
$\Gone\!\circ\!\mathcal{M}$ have disjoint support.
\end{theorem}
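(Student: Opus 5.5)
The plan is to treat $\eNR$ abstractly as a functional of a nonnegative integrable function $f=\Gone$ on the unit sphere (or on $[0,\pi]$ in 1-D), with the measure $d\mu=\sin\theta\,d\theta$ or $d\mu=\sin\theta\,d\theta\,d\phi$. The mirror $\mathcal{M}$ is a measure-preserving involution: $\theta\to\pi-\theta$ in 1-D and $\phi\to\phi+\pi$ in 2-D, the latter being the planar spatial-reversal mirror established in Sec.~\ref{sec:eta2d}. Nonnegativity of $f$ follows from \eqref{eq:G1}, since $\mathbf{C}$ is positive semidefinite as a correlation matrix. Hence $\Gone\geq0$ for any state, pure or mixed, single-excitation or not. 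We also assume the denominator $\int f\,d\mu>0$; otherwise $\eNR$ is undefined. The lower bound $\eNR\geq0$ is then immediate, since the numerator integrates an absolute value.

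For the upper bound I will use the triangle inequality pointwise, $|f-f\circ\mathcal{M}|\leq f+f\circ\mathcal{M}$. I then integrate and invoke measure preservation, $\int f\circ\mathcal{M}\,d\mu=\int f\,d\mu$, which gives a numerator of at most $2\int f\,d\mu$ and hence $\eNR\leq2$. Measure preservation must be checked explicitly for each mirror. In 1-D the substitution $\theta'=\pi-\theta$ preserves $\sin\theta\,d\theta$ and maps $[0,\pi]$ onto itself. In 2-D the shift $\phi\to\phi+\pi$ is a rotation of the periodic $\phi$ domain and leaves $\theta$ untouched.

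For the equality case, note that $\eNR=2$ holds iff $\int\bigl(f+f\circ\mathcal{M}-|f-f\circ\mathcal{M}|\bigr)d\mu=0$. The integrand equals $2\min(f,f\circ\mathcal{M})\geq0$. So equality holds iff $\min(f,f\circ\mathcal{M})=0$ $\mu$-almost everywhere, i.e.\ iff the supports of $f$ and $f\circ\mathcal{M}$ are disjoint up to null sets. Both directions follow from this identity.

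The main obstacle is interpretive rather than technical. "Disjoint support" has to be read modulo $\mu$-null sets. Because $\Gone$ is a continuous trigonometric polynomial, we can say more: for a real-analytic $\Gone$ that is not identically zero, the zero set has measure zero. So the supports cannot be disjoint in the strict sense, and $\eNR=2$ is never attained for a finite array, only approached. This reconciles the theorem with the reported $89.5\%$ and $93.3\%$ of the bound. I would add a remark making this explicit, so that "tight" means the supremum over all admissible states and mirrors is $2$. I would argue this by using the single-excitation freedom of Lemma~\ref{lem:ce} with large apertures, whose patterns concentrate in one hemisphere; I would not attempt a finite-$N$ construction.
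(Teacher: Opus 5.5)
Your proof is correct and is essentially the paper's argument: the pointwise triangle inequality $|f-f\circ\mathcal{M}|\leq f+f\circ\mathcal{M}$ together with invariance of $\int\Gone\,d\Omega$ under the measure-preserving mirror. You make it more careful than the paper by justifying $\Gone\geq0$ from positive semidefiniteness of $\mathbf{C}$, checking measure preservation for both mirrors, and obtaining both directions of the equality case from $f+g-|f-g|=2\min(f,g)$, where the paper only states necessity. Your real-analyticity remark is also correct and goes beyond the paper: for any fixed finite array $\eNR<2$ strictly, so the equality clause is vacuous there, and ``tight'' can only mean that the supremum $2$ is approached as the aperture grows, which your large-aperture sketch would supply.
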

\begin{IEEEproof}
$\mathcal{M}$ is an involution of the sphere preserving the solid-angle measure,
so $\int\Gone\!\circ\!\mathcal{M}\,d\Omega=\int\Gone\,d\Omega$. By the triangle
inequality,
$\int|\Gone-\Gone\!\circ\!\mathcal{M}|\,d\Omega\leq
\int\Gone\,d\Omega+\int\Gone\!\circ\!\mathcal{M}\,d\Omega=2\!\int\!\Gone\,d\Omega$.
Dividing by $\int\Gone\,d\Omega$ gives $\eNR\leq2$; equality requires
$\min(\Gone,\Gone\!\circ\!\mathcal{M})=0$ pointwise.
\end{IEEEproof}
The measured values are thus best read as fractions of an absolute ceiling:
$1.791/2=89.5\%$ in 1-D and $1.867/2=93.3\%$ in 2-D. The planar array is closer
to the maximum because a two-axis phase gradient can null the mirror direction
over a two-dimensional neighbourhood rather than a single angle.

\subsection{The Nonclassical Directional-Emission Certificate}\label{sec:IV-C}
\label{sec:cert}
The preceding section settled the first tier: at the level of
\emph{classical radiation-pattern equivalence}, the timed Dicke array offers
nothing a feed network cannot. We now turn to the second tier, where the
situation reverses. The genuinely nonclassical resource is photon statistics. The normalised
second-order coherence,
\begin{equation}
g^{(2)}(\hat{\mathbf{r}}_1,\hat{\mathbf{r}}_2)=
\frac{\Gtwo(\hat{\mathbf{r}}_1,\hat{\mathbf{r}}_2)}
{\Gone(\hat{\mathbf{r}}_1)\,\Gone(\hat{\mathbf{r}}_2)},
\label{eq:g2norm}
\end{equation}
satisfies $g^{(2)}\geq1$ for every classical field, since classical intensities
obey $\langle I^2\rangle\geq\langle I\rangle^2$. We therefore propose:

\begin{definition}[Nonclassical Directional-Emission Certificate]
The conjunction
\begin{equation}
\eNR>0 \quad\text{and}\quad
g^{(2)}(\hat{\mathbf{r}}_0,\hat{\mathbf{r}}_0)<1
\label{eq:cert}
\end{equation}
is a \emph{sufficient} certificate for nonclassical directional emission, where
$\hat{\mathbf{r}}_0$ is the peak-emission direction.
\end{definition}
Two qualifications matter in this specific use case. Firstly, the condition is sufficient, not necessary. As states with $\eNR=0$ may still be strongly nonclassical, so~\eqref{eq:cert} is not a
definition of quantumness. Secondly, $g^{(2)}$ must be evaluated where
$\Gone>0$; at a pattern null $\Gone\rightarrow0$ and
$g^{(2)}=\Gtwo/[\Gone]^2$ becomes numerically undefined. Evaluating at
$\hat{\mathbf{r}}_0$ removes both difficulties. The two conditions draw on distinct physical resources: pattern asymmetry can
be produced by a classical phased array, whereas $g^{(2)}<1$ requires a
nonclassical emitter. Their conjunction therefore identifies nonclassical
directional emission, and a classical phased array cannot supply it because it
has $g^{(2)}\geq1$ at every angle. Direct
computation in the full $2^{N}$ Hilbert space gives, for the timed Dicke and W
states, $g^{(2)}(\hat{\mathbf{r}},\hat{\mathbf{r}})=0.0000$ exactly, because a
one-excitation state cannot emit two photons. The timed Dicke array therefore
satisfies~\eqref{eq:cert} with $\eNR=1.791$ and $g^{(2)}=0$: it radiates with the
directional asymmetry of a phased array while emitting strictly one photon at a
time. This is the precise sense in which the effect is quantum, and it matches
the emphasis of Slepyan and Boag on single-photon pumping~\cite{SlepyanBoag}.\\
For equivalent experimental interpretation, the certificate is stated so that it maps onto a concrete measurement. With a
far-field detector and a Hanbury~Brown--Twiss pair:
\emph{(1)}~measure the first-order intensity at the peak direction
$\hat{\mathbf{r}}_0$; \emph{(2)}~measure it at the mirror direction
$-\hat{\mathbf{r}}_0$; \emph{(3)}~repeat over the sphere and form $\eNR$
from~\eqref{eq:eta2d}; \emph{(4)}~place the HBT pair at $\hat{\mathbf{r}}_0$;
\emph{(5)}~record the normalised coincidence rate $g^{(2)}(0)$;
\emph{(6)}~if $\eNR>0$ and $g^{(2)}(0)<1$, the emission satisfies the
certificate. Only step~(3) requires an angular scan; steps~(4)--(5) are a
standard single-emitter antibunching measurement. In practice detector dark
counts, background photons and finite timing resolution raise the measured
$g^{(2)}(0)$ above its ideal value, so the certificate should be read as a
threshold test with the usual background subtraction, not as a demand for
$g^{(2)}=0$ exactly. The GHZ state occupies the opposite corner: $\Gone$ is isotropic (below) while $g^{(2)}=3$ from~\eqref{eq:g2ghz} --- a bunched multi-photon emitter with
no directional single-photon channel.

\subsection{The GHZ Pattern Is Isotropic, Not Vanishing}
\label{sec:ghz}
It is often stated that the GHZ state gives $\Gone\approx0$, which is not necessarily always correct.
Writing $\Gone=\sum_{ij}e^{\,jk_0\hat{\mathbf{r}}\cdot(\mathbf{r}_i-\mathbf{r}_j)}
\langle\hat{\sigma}^{+}_{i}\hat{\sigma}^{-}_{j}\rangle$, the GHZ state has no
single-excitation coherence between distinct sites, so all $i\neq j$ terms
vanish, while each diagonal term contributes $\langle\hat{n}_i\rangle=1/2$. Hence, it becomes
\begin{equation}
\Gone_{\mathrm{GHZ}}(\hat{\mathbf{r}})=\frac{N}{2}\qquad\text{for all }\hat{\mathbf{r}}.
\label{eq:ghziso}
\end{equation}
The two-photon term follows from the same argument: only the $\ketv{e\cdots e}$
branch contributes, and it supplies $N(N-1)$ ordered pairs, so
$\Gtwo_{\mathrm{GHZ}}=N(N-1)$ in the same spatial mode. Hence
\begin{equation}
g^{(2)}_{\mathrm{GHZ}}=\frac{N(N-1)}{(N/2)^2}=\frac{4(N-1)}{N},
\label{eq:g2ghz}
\end{equation}
which equals $3$ for $N=4$, exceeding the thermal value $2$. Both closed forms
were verified numerically in the full $2^N$ space. The GHZ array
radiates \emph{incoherently and isotropically at half the single-QD rate per
site}, not weakly. Its directivity is $0$~dBi because the pattern is uniform, not
because the emission is suppressed.

\section{Collective Decay with Dipole--Dipole Coupling}
\label{sec:decay}
The Lindblad equation~\eqref{eq:lindblad} written with a single collective
$\hat{J}^{-}$ presupposes the Dicke limit $d\ll\lambda_0$, in which every pair
decays cooperatively at the same rate. At $d=0.3\lambda_0$ this is not the case.
The correct open-system generator for extended arrays is~\cite{Breuer,Masson}
\begin{align}
\dot{\rho}=-\frac{i}{\hbar}\Bigl[\hat{H}_0+\hbar\!\!\sum_{i\neq j}\!\!J_{ij}
\hat{\sigma}^{+}_i\hat{\sigma}^{-}_j,\;\rho\Bigr]
+\sum_{i,j}\Gamma_{ij}\Bigl(&\hat{\sigma}^{-}_j\rho\hat{\sigma}^{+}_i \nonumber\\
-\tfrac12\bigl\{\hat{\sigma}^{+}_i\hat{\sigma}^{-}_j,&\,\rho\bigr\}\Bigr),
\label{eq:lindblad_dd}
\end{align}
where, for dipoles perpendicular to the separation axis and
$\xi_{ij}=k_0|\mathbf{r}_i-\mathbf{r}_j|$,
\begin{align}
\frac{\Gamma_{ij}}{\Gamma}&=\frac{3}{2}
\left[\frac{\sin\xi}{\xi}+\frac{\cos\xi}{\xi^{2}}-\frac{\sin\xi}{\xi^{3}}\right],
\label{eq:Gij}\\
\frac{J_{ij}}{\Gamma}&=\frac{3}{4}
\left[-\frac{\cos\xi}{\xi}+\frac{\sin\xi}{\xi^{2}}+\frac{\cos\xi}{\xi^{3}}\right].
\label{eq:Jij}
\end{align}
The coherent term $J_{ij}$ is the collective Lamb shift. At the design point
$d=0.3\lambda_0$ ($\xi=1.885$), the nearest-neighbour values are
$\Gamma_{ij}=0.413\Gamma$ and $J_{ij}=0.289\Gamma$: the Lamb shift is $29\%$ of
the decay rate and cannot be discarded.

\subsection{Correction to the $N^{2}\Gamma$ Superradiance Claim}
Two separate statements are frequently conflated. Dicke's $N^{2}$ law is the
\emph{peak burst rate of a fully inverted ensemble} at the centre of its cascade.
The timed Dicke state, by contrast, carries exactly \emph{one} excitation and can
emit exactly one photon; its decay rate in the Dicke limit is $N\Gamma$. Beyond
that limit the decay rates are the eigenvalues of $\Gamma_{ij}$, and the
superradiant mode decays at
\begin{equation}
\Gamma_{\mathrm{SR}}=\lambda_{\max}\bigl(\Gamma_{ij}\bigr),
\label{eq:lammax}
\end{equation}
which tends to $N\Gamma$ only as $d\rightarrow0$. Crucially,
$\lambda_{\max}$ is the \emph{maximum collective rate the array can support},
not the rate of any particular state. A timed Dicke state
$\mathbf{c}$ decays initially at its projection onto the coupling matrix,
\begin{equation}
\Gamma_\psi(0)=\mathbf{c}^{\dagger}\bm{\Gamma}\,\mathbf{c},
\label{eq:rateproj}
\end{equation}
and, since $\mathbf{c}$ is not generally an eigenvector of $\bm{\Gamma}$, the
subsequent decay is multi-exponential. For the 1-D chain at $d=0.3\lambda_0$ the
timed Dicke projection onto the four eigenmodes is
$(0.118,\,0.550,\,0.075,\,0.257)$, giving
$\Gamma_\psi(0)=1.01\Gamma$ against $\lambda_{\max}=1.61\Gamma$; for the
$4\times4$ array $\Gamma_\psi(0)=2.08\Gamma$ against
$\lambda_{\max}=2.75\Gamma$. Table~\ref{tab:decay} gives the
computed values. At $d=0.3\lambda_0$ the timed Dicke states decay at $1.01\Gamma$ and
$2.08\Gamma$; quoting $16\Gamma$ and $256\Gamma$ overstates these by more than
an order of magnitude.

\begin{table}[!t]
\renewcommand{\arraystretch}{1.18}
\caption{Superradiant Rate $\lambda_{\max}(\Gamma_{ij})$ in Units of $\Gamma$}
\label{tab:decay}
\centering
\begin{tabular}{lccccc}
\toprule
Array & $d{=}0.01\lambda_0$ & $0.1\lambda_0$ & $0.3\lambda_0$ & $0.5\lambda_0$ & $N\Gamma$\\
\midrule
1-D, $N{=}4$      & $3.99$  & $3.32$ & $1.61$ & $1.29$ & $4$\\
2-D, $4{\times}4$ & $15.94$ & $10.84$ & $2.75$ & $2.10$ & $16$\\
\midrule
\multicolumn{6}{@{}p{0.94\columnwidth}@{}}{Dicke limit recovered as $d\rightarrow0$;
 $N^{2}\Gamma$ is never attained.}\\
\multicolumn{6}{@{}p{0.94\columnwidth}@{}}{Subradiant mode, 2-D at $0.3\lambda_0$:
 $\lambda_{\min}=0.0010\,\Gamma$.}\\
\bottomrule
\end{tabular}
\end{table}

\subsection{Subradiant Dark States}
The same eigenanalysis exposes the complementary regime. For the $4\times4$
array at $d=0.3\lambda_0$ the smallest eigenvalue is
$\lambda_{\min}=0.0010\,\Gamma$, a lifetime enhancement of three orders of
magnitude. These dark states are invisible in a purely collective $\hat{J}^{-}$
model, in which the $N-1$ non-symmetric modes are exactly decoupled rather than
weakly radiating. This radiatively subradiant eigenmode suggests a potential photon-storage
mechanism, subject to nonradiative and dephasing losses, which are not included
here and which would in practice bound the achievable lifetime well below
$1000/\Gamma$. To the best of our knowledge such modes have not been reported
for QD antenna arrays of this kind.

\section{Quantum Direction-of-Arrival Estimation}
This section extends the previous discussion of QD arrays for direction-of-arrival (DoA) estimation applications.
\subsection{Quantum Spatial Coherence Matrix}
 Classical DoA estimator, MUSIC estimates a sample covariance
$\mathbf{R}=\mathbb{E}[\mathbf{x}\mathbf{x}^{\mathsf{H}}]$ from $L$ temporal
snapshots~\cite{Schmidt}. The quantum counterpart is defined directly in the
\emph{array-element basis}, and is exactly the correlation matrix already
introduced in~\eqref{eq:G1}:
\begin{equation}
[\mathbf{R}_Q]_{ij}=C_{ij}
=\Trace\!\bigl(\hat{\sigma}^{+}_i\hat{\sigma}^{-}_j\rho\bigr),
\qquad i,j=1,\dots,MN .
\label{eq:QSCM}
\end{equation}
so $\mathbf{R}_Q=\mathbf{C}$ exactly, with
$\mathbf{R}_Q\in\mathbb{C}^{MN\times MN}$ and diagonal entries equal to the
individual emitter excitation probabilities. The first-order coherence is the
same quadratic form as~\eqref{eq:G1},
\begin{equation}
\Gone(\hat{\mathbf{r}})
=\mathbf{a}^{\mathsf{H}}(\hat{\mathbf{r}})\,\mathbf{R}_Q\,
\mathbf{a}(\hat{\mathbf{r}}),
\label{eq:G1RQ}
\end{equation}
with $\mathbf{a}(\theta,\phi)\in\mathbb{C}^{MN}$ from~\eqref{eq:steer}. Both
objects therefore live in the same $MN$-dimensional element space, and the
signal/noise subspace decomposition below is the standard array one. (Defining
$\mathbf{R}_Q$ instead on a grid of observation directions would give a matrix
whose dimension is the grid size, which is incompatible with an
$MN$-dimensional steering vector.)

\subsection{Source Model: How the Sources Enter $\mathbf{R}_Q$}
\label{sec:srcmodel}
For direction finding the array acts as a \emph{receiver}, so we must specify
how $K$ incident sources produce $\rho$ and hence $\mathbf{R}_Q$. Consider $K$
mutually incoherent far-field sources at directions
$(\theta_k,\phi_k)$ driving the array weakly, so that the response stays in the
single-excitation manifold. Source $k$, with complex amplitude $E_k$, imprints
its own propagation phase across the array and excites the amplitude vector
\begin{equation}
\mathbf{c}^{(k)}=E_k\,\mathbf{a}^{*}(\theta_k,\phi_k),
\label{eq:drive}
\end{equation}
the conjugate appearing because an incoming wave imprints
$e^{-jk_0\hat{\mathbf{r}}_k\cdot\mathbf{r}_j}$, consistent
with~\eqref{eq:conv}. Since $C_{ij}=c_i^{*}c_j$, a single source therefore
contributes $\mathbf{C}=|E_k|^2\,
\mathbf{a}(\theta_k,\phi_k)\mathbf{a}^{\mathsf{H}}(\theta_k,\phi_k)$. Because
the sources are mutually incoherent, their cross terms average to zero over the
independent phases, and the element-basis coherence matrix becomes
\begin{equation}
\mathbf{R}_Q=\sum_{k=1}^{K}p_k\,
\mathbf{a}(\theta_k,\phi_k)\mathbf{a}^{\mathsf{H}}(\theta_k,\phi_k)
+\sigma_n^2\mathbf{I},
\label{eq:srcmodel}
\end{equation}
with $p_k=\langle|E_k|^2\rangle$ and $\sigma_n^2$ the background level. We
confirmed~\eqref{eq:srcmodel} numerically by averaging $2\times10^{5}$
independent phase realisations of~\eqref{eq:drive}, which reproduces the
closed form to a relative error of $8\times10^{-4}$.\\
Equation~\eqref{eq:srcmodel} makes the rank structure explicit and resolves an
issue that a state-only description leaves open: a single pure state
$\rho=\ketv{\psi}\brav{\psi}$ gives a rank-one $\mathbf{R}_Q$ and cannot
represent two independent sources. Under~\eqref{eq:srcmodel} the noiseless
coherence matrix has $\operatorname{rank}(\mathbf{R}_Q)=K$; for the two-source
case studied below the eigenvalues are $(17.07,\,14.93,\,\mathcal{O}(10^{-15}),
\dots)$, i.e.\ exactly two significant modes, so the noise subspace is
well defined.

\subsection{Quantum MUSIC}
Eigendecomposing
$\mathbf{R}_Q=\sum_k\lambda_k\ketv{v_k}\brav{v_k}$ with
$\lambda_1\!\geq\!\cdots\!\geq\!\lambda_{MN}$, the $K$ largest eigenvalues span the signal subspace and the remainder the
noise subspace $\mathbf{E}_n$. For an exact model $\rho$ one simply has
$K=\operatorname{rank}(\mathbf{R}_Q)$. Classical order-selection rules such as
MDL~\cite{Wax} presuppose a \emph{sample} covariance and therefore apply only to
the finite-shot estimator
$\widehat{\mathbf{R}}_Q=N_s^{-1}\sum_{s}\mathbf{R}_Q^{(s)}$
formed from $N_s$ repeated preparations. The
Q-MUSIC pseudo-spectrum is
\begin{equation}
P_{\mathrm{QMUSIC}}(\theta,\phi)=
\frac{1}{\mathbf{a}^{\mathsf{H}}(\theta,\phi)\,
\mathbf{E}_n\mathbf{E}_n^{\mathsf{H}}\,\mathbf{a}(\theta,\phi)},
\label{eq:qmusic}
\end{equation}
whose $K$ peaks on the $(\theta,\phi)$ sphere locate the sources.
Algorithm~\ref{alg:qmusic} summarises the procedure.

\begin{algorithm}[!t]
\caption{Quantum MUSIC DoA for the $M\times N$ QD-UPA}
\label{alg:qmusic}
\begin{algorithmic}[1]
\Require $\mathbf{R}_Q$ from \eqref{eq:srcmodel} (Sec.~\ref{sec:srcmodel}), geometry $\{\mathbf{r}_{mn}\}$, grid
         $\{(\theta_i,\phi_j)\}$
\Ensure  $\{(\hat\theta_k,\hat\phi_k)\}_{k=1}^{K}$
\State Build $\mathbf{R}_Q$ from \eqref{eq:QSCM}
\State $[\lambda_k,\ketv{v_k}]\leftarrow\mathrm{eig}(\mathbf{R}_Q)$, sorted descending
\State $K\leftarrow\operatorname{rank}(\mathbf{R}_Q)$ (exact $\rho$), or MDL on $\widehat{\mathbf{R}}_Q$ (finite shots)
\State $\mathbf{E}_n\leftarrow[\ketv{v_{K+1}},\dots,\ketv{v_{MN}}]$
\For{each grid point $(\theta_i,\phi_j)$}
  \State $\mathbf{a}\leftarrow$ steering vector \eqref{eq:steer}
  \State $P_{\mathrm{QMUSIC}}\leftarrow
         1/(\mathbf{a}^{\mathsf{H}}\mathbf{E}_n\mathbf{E}_n^{\mathsf{H}}\mathbf{a})$
\EndFor
\State \Return the $K$ largest well-separated peaks
\end{algorithmic}
\end{algorithm}

\subsection{Quantum Fisher Information and a Bounded Advantage}
\label{sec:qfi}
For a pure single-excitation probe $\ketv{\psi}=\sum_{mn}c_{mn}\ketv{e_{mn}}$,
the elevation angle $\theta$ is imprinted through the diagonal generator
\begin{equation}
\hat{G}_\theta=k_0\bigl(m\,d_x\,u' + n\,d_y\,v'\bigr),
\quad u'=\partial_\theta u,\;\; v'=\partial_\theta v,
\label{eq:generator}
\end{equation}
with $u'=\cos\theta\cos\phi$ and $v'=\cos\theta\sin\phi$. The QFI is
\begin{align}
\FQ(\theta)&=4\,\Var\!\bigl(\hat{G}_\theta\bigr) \nonumber\\
&=4\left[\sum_{mn}|c_{mn}|^{2}G_{mn}^{2}
-\left(\sum_{mn}|c_{mn}|^{2}G_{mn}\right)^{\!2}\right],
\label{eq:qfi}
\end{align}
and the QCRB over $N_{\mathrm{meas}}$ repetitions is
$\Delta\theta\geq1/\sqrt{N_{\mathrm{meas}}\FQ}$~\cite{Helstrom,Liu}.

\textbf{Symmetric W-state probe:}
As a reference case, we first consider the symmetric W state, in which the
single excitation is distributed uniformly over all $M\times M$ elements.
Thus, $|c_{mn}|^2=1/M^2$, with $m$ and $n$ statistically independent.
For this uniform spatial distribution,
$\mathrm{Var}(m)=\mathrm{Var}(n)=(M^2-1)/12$, which leads to
\begin{equation}
\FQ^{W}=\tfrac{1}{3}k_0^2d^2\bigl(u'^2+v'^2\bigr)\bigl(M^2-1\bigr).
\label{eq:Fsep}
\end{equation}

\textbf{Entangled corner-pair probe:}
To examine the effect of maximum spatial separation on the attainable
angular information, we next consider an entangled probe formed by the
two maximally separated corner elements. The excitation is prepared in
the superposition
$\frac{1}{\sqrt{2}}(|e_{00}\rangle+|e_{M-1,M-1}\rangle)$.
For this probe, the generator takes two equally likely values
$\{0,k_0d(M-1)(u'+v')\}$. Hence,
\begin{equation}
\FQ^{\mathrm{corner}}=k_0^2d^2(M-1)^2\bigl(u'+v'\bigr)^2.
\label{eq:Fent}
\end{equation}

Dividing~\eqref{eq:Fent} by~\eqref{eq:Fsep} and specialising to the diagonal
look direction $\phi=45^\circ$, where $u'=v'$ and thus
$(u'+v')^2=2(u'^2+v'^2)$, gives the central closed-form result
\begin{equation}
\frac{\FQ^{\mathrm{corner}}}{\FQ^{W}}
=\frac{3(M-1)(u'+v')^2}{(M+1)(u'^2+v'^2)}
\;\xrightarrow{\;\phi=45^\circ\;}\;
\frac{6(M-1)}{M+1}.
\label{eq:ratio}
\end{equation}
Equation~(39) provides a single-parameter comparison of
$F_{\theta\theta}$ for the two probes at fixed azimuth. Therefore, this
result characterizes elevation estimation for a specified azimuth and
does not by itself describe the complete two-dimensional DoA problem.
For simultaneous estimation of elevation and azimuth, the full QFI
matrix $F_Q$ is required, with entries $F_{\alpha\beta}$,
$\alpha,\beta\in\{\theta,\phi\}$, accounting for the information and
correlations associated with both angular parameters. In particular,
the corner-pair probe must be examined through this matrix because its
geometry can lead to a loss of sensitivity to one of the angular
coordinates.
Equation~\eqref{eq:ratio} concerns \emph{single-parameter elevation estimation
at fixed azimuth}. The full two-parameter problem requires the QFI matrix $\FQM$ with entries
$F_{\alpha\beta}$, $\alpha,\beta\in\{\theta,\phi\}$, and
\begin{equation}
\mathrm{Cov}(\hat{\bm{\theta}})\succeq
\bigl(N_{\mathrm{meas}}\FQM\bigr)^{-1}.
\label{eq:qcrbmat}
\end{equation}
For simultaneous multiparameter estimation, the attainability of the
matrix QCRB additionally depends on measurement compatibility. Hence,
we use $F_Q^{-1}$ as an information-theoretic precision bound and do not
assume a measurement that jointly saturates the bound. Evaluated at $(30^\circ,45^\circ)$, the W probe gives
$F_{\theta\theta}=37.01$, $F_{\phi\phi}=12.34$ and $F_{\theta\phi}=0$,
whereas the corner-pair probe gives $F_{\theta\theta}=133.24$ but
$F_{\phi\phi}=0$. The corner pair lies along the $\phi=45^\circ$ diagonal, so
its generator is insensitive to azimuth and $\FQM$ is singular: \emph{the probe
that maximises elevation information is blind in azimuth.} Any practical scheme
must therefore interleave probe geometries. This is a substantive limitation of
the corner-pair construction and is not visible in the scalar analysis.
\begin{figure}[ht!]
\centering
\includegraphics[width=\columnwidth]{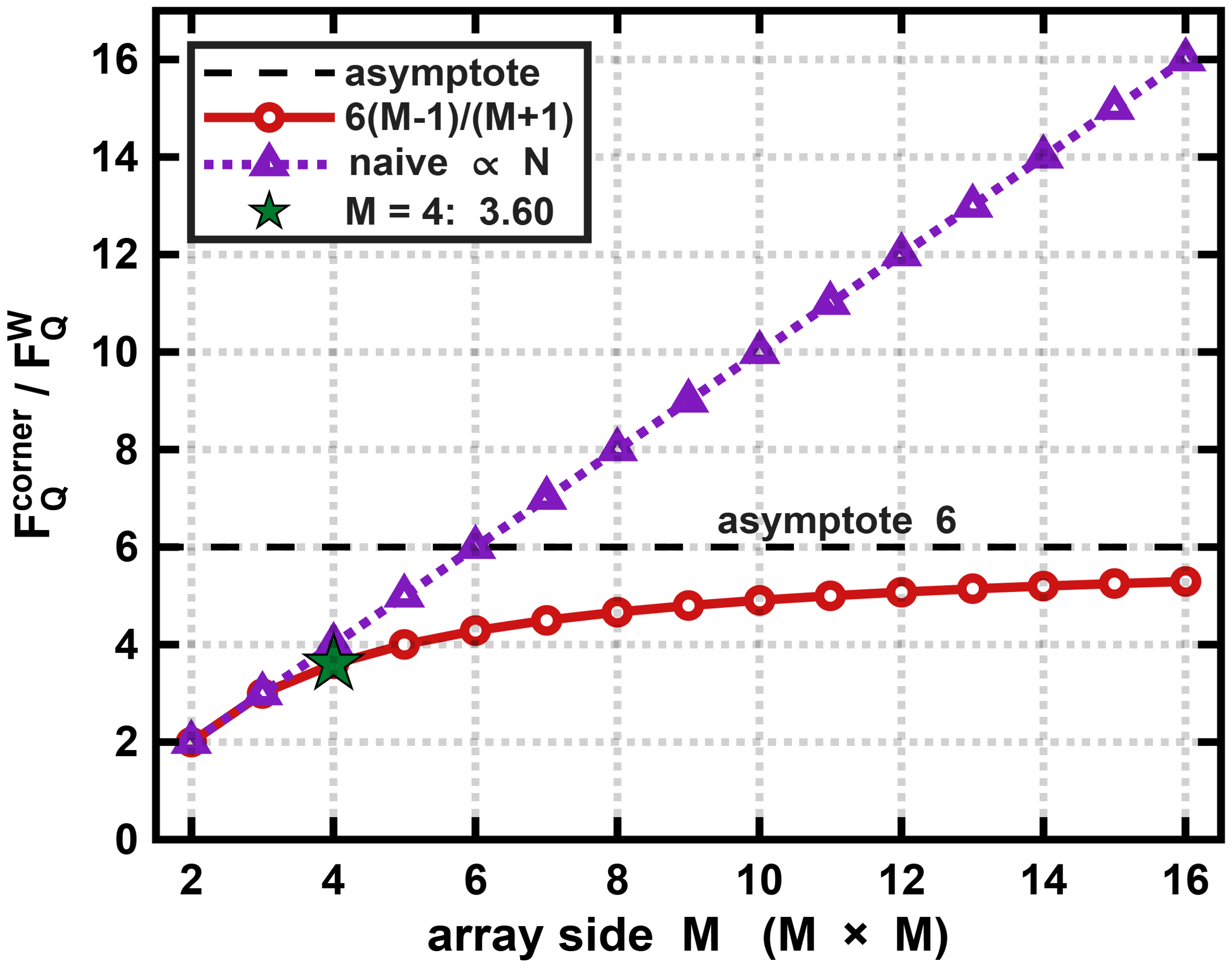}
\caption{The QFI enhancement saturates. Red: the closed form
$6(M-1)/(M+1)$ from~\eqref{eq:ratio}, approaching $6$ as $M\rightarrow\infty$.
Purple: the linear-in-$N$ scaling a naive Heisenberg argument would predict.
The star marks this work, $M=4$, ratio $3.60$ ($1.90\times$ in angular standard
deviation).}
\label{fig:qfisat}
\end{figure}

Equation~\eqref{eq:ratio} is the key single-parameter precision statement.
For $M=4$ it evaluates to $3.60$, an angular-precision gain of
$\sqrt{3.60}=1.90\times$; Fig.~\ref{fig:qfisat} plots the law against array
size and makes the saturation explicit. The ratio is
\emph{monotonically bounded}: as $M\rightarrow\infty$ it saturates
at $6$, i.e.\ at most $2.45\times$ in standard deviation. The result is specific
to an $M\times M$ uniform array, a corner-pair probe, $\phi=45^\circ$,
elevation estimation, and a pure-state model; it is not a universal
quantum-array law. Within that scope, \textbf{it is not Heisenberg $1/N$
scaling.} True
Heisenberg scaling requires $N$-fold phase accumulation on a single parameter;
a passive array of fixed aperture instead has its Fisher information bounded by
the geometric variance of the aperture, and entanglement can only redistribute
weight toward the aperture edges. Reports of $1/N$ DoA scaling for passive
quantum arrays should therefore be treated with care --- the aperture, not the
photon number, sets the bound.

\section{Results}
All results below were computed using standalone MATLAB scripting and independently cross-verified in Python. Both platforms agree up to at least six significant figures.

\subsection{Solver Validation}
Fig.~\ref{fig:pop} shows the Lindblad solution~\eqref{eq:lindblad} in the
17-dimensional Dicke subspace. The W state starts at $P_{\mathrm{exc}}(0)=1.000$
and decays faster than the isolated-QD reference $e^{-\Gamma t}$ --- the signature
of collective coupling through $\hat{J}^{\pm}$~\cite{Masson}. The GHZ state starts at exactly $P_{\mathrm{exc}}(0)=0.500$, as
\eqref{eq:GHZ} requires. We stress that this population value carries no
nonclassicality on its own --- an ordinary statistical mixture of excited and
unexcited emitters reproduces it. What is nonclassical is the coherence between
the two branches, and that coherence shows itself in the higher-order
correlations of Sec.~\ref{sec:cert}, not in the population curve.

\begin{figure}[!t]\centering
\includegraphics[width=\columnwidth]{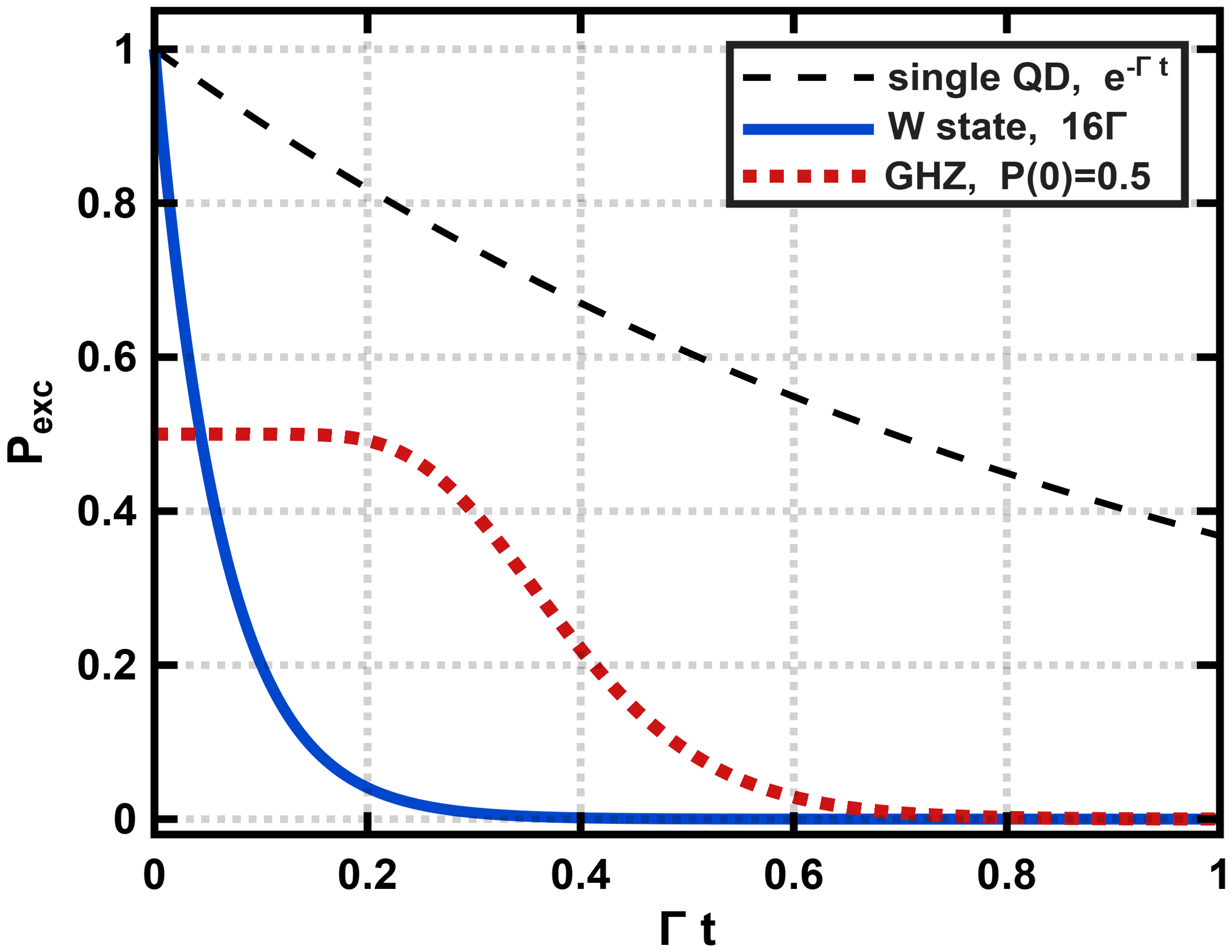}
\caption{Population dynamics from~\eqref{eq:lindblad}, solved in the symmetric
Dicke subspace of dimension $MN{+}1=17$. Blue: W state, $P_{\mathrm{exc}}(0)=1.000$,
decaying at $MN\Gamma=16\Gamma$ and so far faster than the dashed single-QD
reference $e^{-\Gamma t}$. Red: GHZ, starting at exactly $0.500$ and holding that
value until $\Gamma t\approx0.2$ before falling --- the delayed superradiant
cascade from the fully inverted branch. This figure is Model~A throughout; the
finite-spacing rates of Sec.~\ref{sec:decay} (Model~B) are shown separately in
Fig.~\ref{fig:super} and are not superimposed here.}
\label{fig:pop}
\end{figure}

\subsection{One-Dimensional Array}
Table~\ref{tab:1d} lists the 1-D metrics. The classical baseline reproduces
textbook values ($D_0=6.02$~dBi, HPBW $=26.3^\circ$). The timed Dicke state at
$d_{\mathrm{NR}}=0.3\lambda_0$ gives $\eNR=1.791$, matching the analytical
prediction of~\cite{SlepyanBoag}, with $\Gone(0^\circ)=1.0000$ and
$\Gone(180^\circ)=0.0625$ --- a front-to-back ratio of exactly $16.0\times$.
The classical phased-array baseline at the same spacing and with the same
complex weights reproduces $D_0=6.73$~dBi and $\eNR=1.791$ identically
(Table~\ref{tab:1d}), confirming the classical-equivalence lemma; the
distinguishing feature is therefore not the first-order pattern but the photon
statistics of Sec.~\ref{sec:cert}. The W state returns $\eNR=0.000$, confirming
that the formalism correctly identifies a phase-free state as symmetric.
Fig.~\ref{fig:polar1d} shows the corresponding polar patterns: the forward and
mirrored timed-Dicke lobes differ at every angle, whereas the classical and W
curves are indistinguishable from their own mirrors.

\begin{table}[!t]
\renewcommand{\arraystretch}{1.18}
\caption{One-Dimensional $N=4$ InGaAs QD-ULA}
\label{tab:1d}
\centering
\setlength{\tabcolsep}{4.5pt}
\begin{tabular}{lcccc}
\toprule
\textbf{Configuration} & $D_0$ & HPBW & SLL & $\eNR$\\
 & (dBi) & (deg) & (dB) & \\
\midrule
Classical, $0.5\lambda_0$ & $6.02$ & $26.3$ & $-11.30$ & $0.000$\\
W state, $0.5\lambda_0$ & $6.02$ & $26.3$ & $-11.30$ & $0.000$\\
Classical phased, $0.3\lambda_0$ & $6.73$ & $51.6$ & $-11.30$ & $1.791$\\
Timed Dicke, $0.3\lambda_0$ & $\mathbf{6.73}$ & $51.6$ & $-11.30$ & $\mathbf{1.791}$\\
GHZ, $0.5\lambda_0$ & $0.00$ & isotropic & --- & $0.000$\\
\midrule
\multicolumn{5}{@{}p{0.94\columnwidth}@{}}{Timed Dicke: $\Gone(0^\circ)=1.0000$,
 $\Gone(180^\circ)=0.0625$, F/B $=16.0\times$}\\
\multicolumn{5}{@{}p{0.94\columnwidth}@{}}{Phase step $k_0d_{\mathrm{NR}}=108^\circ$;
 optimum at $d=0.300\lambda_0$ ($\eNR=1.791$)}\\
\multicolumn{5}{@{}p{0.94\columnwidth}@{}}{Collective rate $\lambda_{\max}(\Gamma_{ij})=1.61\Gamma$ (not $16\Gamma$)}\\
\multicolumn{5}{@{}p{0.94\columnwidth}@{}}{$g^{(2)}=0$ exactly $\Rightarrow$ certificate~\eqref{eq:cert} satisfied}\\
\bottomrule
\end{tabular}
\end{table}

\begin{figure}[!t]\centering
\includegraphics[width=1\columnwidth]{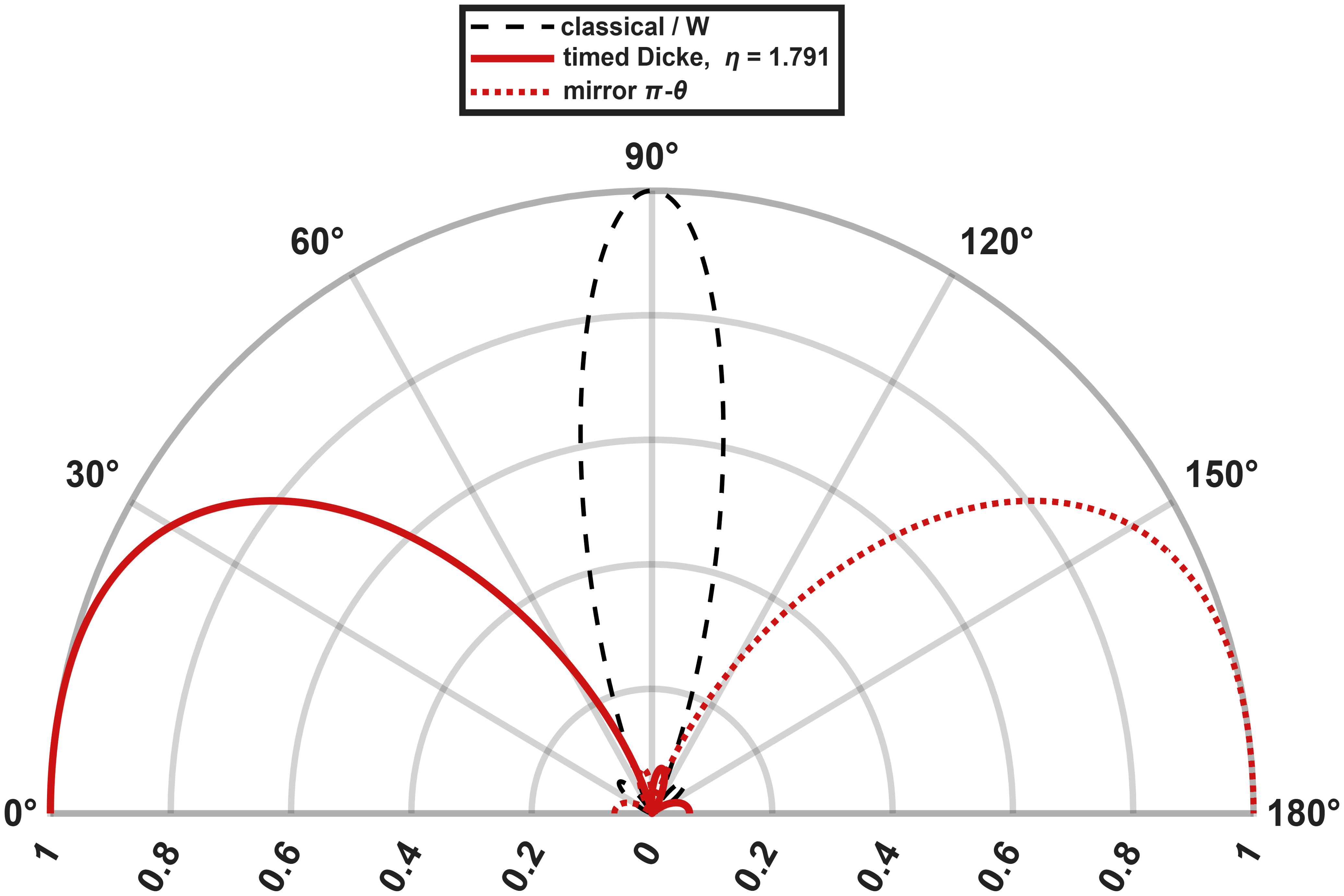}
\caption{1-D polar patterns. Dashed black: classical $|AF|^2$, symmetric,
$\eNR=0$. Blue: W state, also symmetric. Solid red: timed Dicke forward
$\Gone(\theta)$; dotted red: mirror $\Gone(\pi-\theta)$. The red pair differ at
every angle, directly demonstrating directional pattern asymmetry under spatial reversal.
By the classical-equivalence lemma (Lemma~\ref{lem:ce}) this asymmetry is
reproducible by a phased feed and is therefore not in itself evidence of broken
reciprocity; the nonclassical content of this state lies in its photon
statistics.}
\label{fig:polar1d}
\end{figure}

\subsection{Two-Dimensional Array}
Table~\ref{tab:2d} reports the planar results. Two validations bracket the
findings. First, the W state gives $\eNR^{2\mathrm{D}}=4\times10^{-16}$ ---
numerically exact zero, confirming both the formalism and the discretisation of
\eqref{eq:eta2d}. Second, the TD-2D state gives $\eNR^{2\mathrm{D}}=1.867$, a
$6.3\%$ increase over the 1-D value, since two independent phase gradients break
the symmetry over the full sphere rather than a single plane. The front-to-back
ratio reaches $27.2$~dB because the mirror direction lies in a deep
two-dimensional null, visible as the elevation cut in Fig.~\ref{fig:cut2d} and
over the full sphere in Fig.~\ref{fig:map2d}.\\
The TD-2D beam points at $(\theta,\phi)=(90^\circ,45^\circ)$: the diagonal
end-fire direction. This is the closest physically realisable point to the
formal optimum $u=v=1$, which is excluded by $u^2+v^2\leq1$. Consequently $D_0=9.49$~dBi is \emph{below} the classical broadside
$13.51$~dBi. This is the expected trade: the planar aperture is foreshortened at
end-fire, so pattern asymmetry is purchased at the cost of directivity --- a
trade-off absent in the 1-D case, where end-fire directivity exceeds broadside.

\begin{table}[!t]
\renewcommand{\arraystretch}{1.18}
\caption{Two-Dimensional $4\times4$ InGaAs QD-UPA}
\label{tab:2d}
\centering
\begin{tabular}{lcccc}
\toprule
\textbf{Configuration} & $D_0$ & HPBW$_\theta$ & $\eNR^{2\mathrm{D}}$ & Peak $(\theta,\phi)$\\
 & (dBi) & (deg) & & (deg)\\
\midrule
Classical, $0.5\lambda_0$ & $13.51$ & $13.2$ & $4\!\times\!10^{-16}$ & $(0,0)$\\
W state, $0.5\lambda_0$   & $13.51$ & $13.2$ & $4\!\times\!10^{-16}$ & $(0,0)$\\
TD-2D, $0.3\lambda_0$     & $9.49$ & $104.4$ & $\mathbf{1.867}$ & $(90,45)$\\
GHZ, $0.5\lambda_0$       & $0.00$ & isotropic & $0.000$ & uniform\\
\midrule
\multicolumn{5}{@{}p{0.94\columnwidth}@{}}{HPBW$_\theta$ is the elevation cut through the peak; for TD-2D the azimuth cut is $44.5^\circ$. Front-to-back $27.2$~dB.}\\
\multicolumn{5}{@{}p{0.94\columnwidth}@{}}{Optimum spacing $d=0.338\lambda_0$ ($\eNR^{2\mathrm{D}}=1.926$)}\\
\multicolumn{5}{@{}p{0.94\columnwidth}@{}}{Collective rate $\lambda_{\max}(\Gamma_{ij})=2.75\Gamma$; subradiant $\lambda_{\min}=0.001\Gamma$}\\
\multicolumn{5}{@{}p{0.94\columnwidth}@{}}{$\eNR^{2\mathrm{D}}=1.867=93.3\%$ of the bound $\eNR\leq2$ (Theorem~\ref{thm:bound})}\\
\bottomrule
\end{tabular}
\end{table}

\begin{figure}[!t]\centering
\includegraphics[width=\columnwidth]{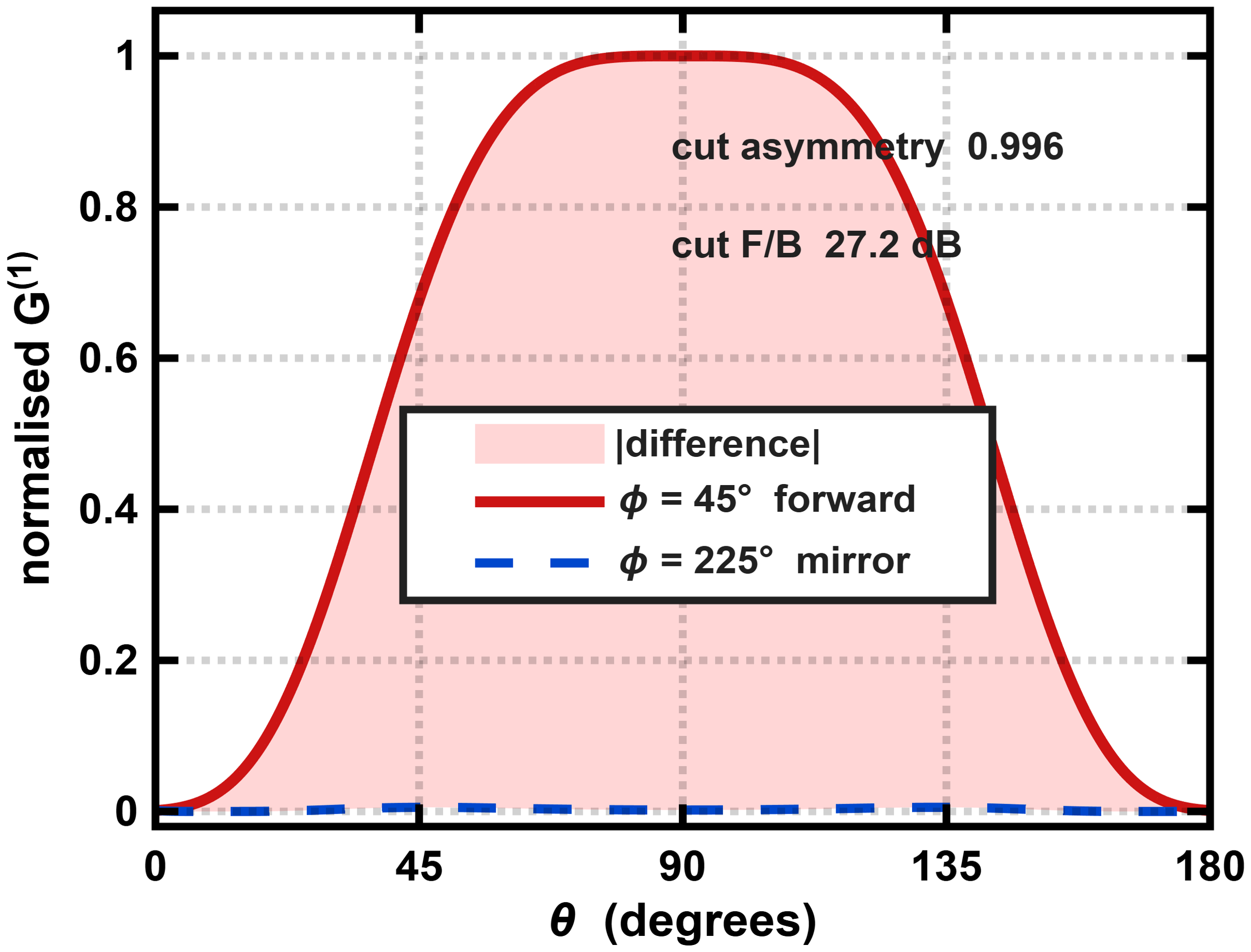}
\caption{Forward and mirrored elevation cuts of the TD-2D state at
$\phi=45^\circ$ and $\phi=225^\circ$. The shaded area is the integrand
of~\eqref{eq:eta2d}. The mirror direction is a deep null, giving a $27.2$~dB
front-to-back ratio and $\eNR^{2\mathrm{D}}=1.867$, i.e.\ $93.3\%$ of the
bound of Theorem~\ref{thm:bound}.}
\label{fig:cut2d}
\end{figure}

\begin{figure}[!t]\centering
\includegraphics[width=\columnwidth]{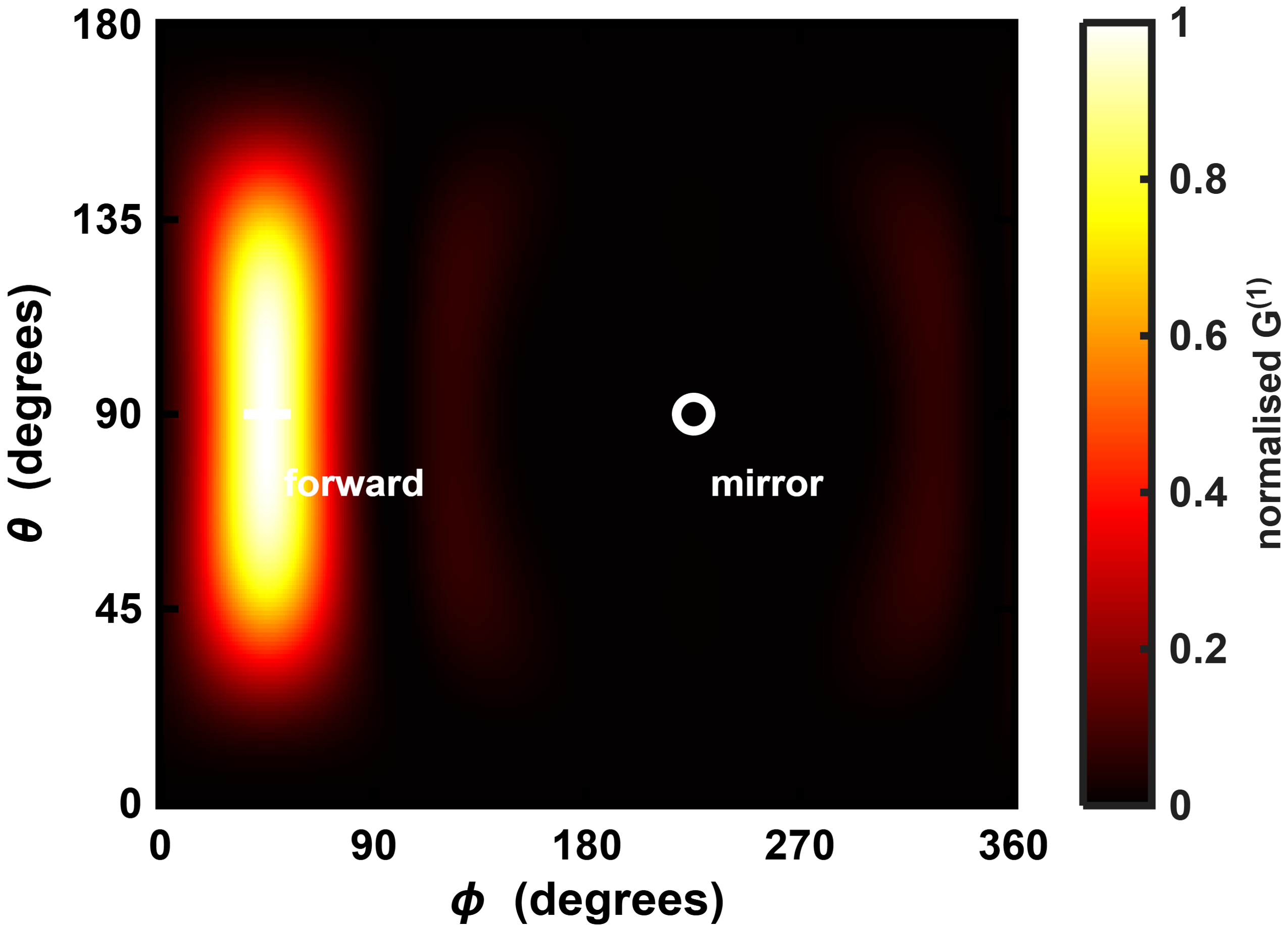}
\caption{Full-sphere $\Gone(\theta,\phi)$ for the TD-2D state. The beam peaks at
$(90^\circ,45^\circ)$; the spatial-reversal mirror at $(90^\circ,225^\circ)$ is a deep
null, giving the $27.2$~dB front-to-back ratio that underpins
$\eNR^{2\mathrm{D}}=1.867$.}
\label{fig:map2d}
\end{figure}

\subsection{Collective Decay Rates}
Fig.~\ref{fig:super} plots $\lambda_{\max}(\Gamma_{ij})$ against spacing for both
arrays. The Dicke value $N\Gamma$ is recovered only for $d\lesssim0.01\lambda_0$; at the
operating spacing $d=0.3\lambda_0$ the maximum eigenrates are $1.61\Gamma$ and
$2.75\Gamma$, while the timed Dicke states themselves decay at $1.01\Gamma$ and
$2.08\Gamma$. The naive $N^{2}\Gamma$ line is shown for reference and lies one to
two orders of magnitude above the true curve throughout the physically accessible
range. The lower branch shows $\lambda_{\min}$, the subradiant dark state.

\begin{figure}[!t]\centering
\includegraphics[width=\columnwidth]{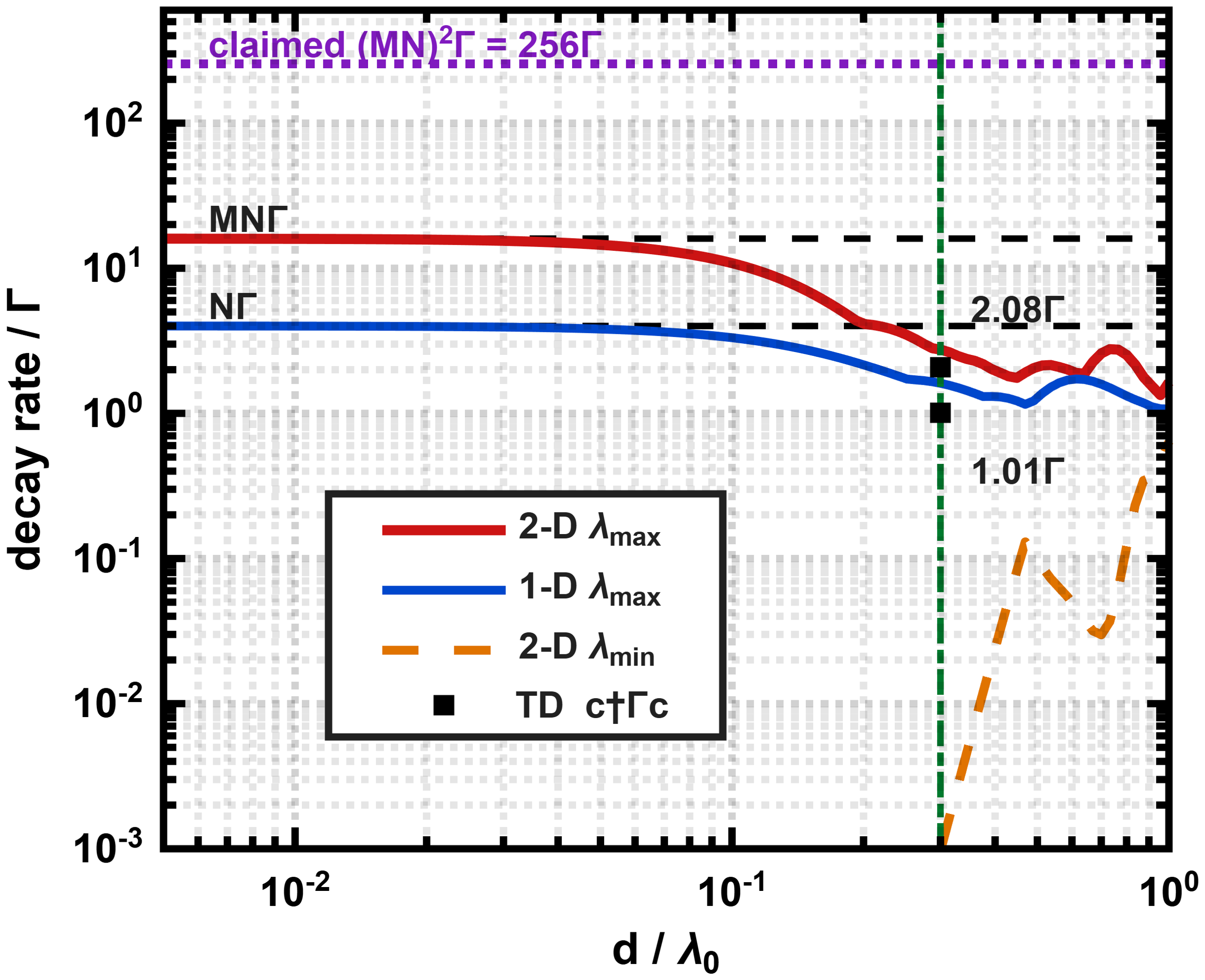}
\caption{Collective decay rates versus spacing, from the eigenvalues
of~\eqref{eq:Gij}. Solid red: $\lambda_{\max}$ for the $4\times4$ array; solid
blue: $\lambda_{\max}$ for the $N=4$ chain; dashed orange: the subradiant
$\lambda_{\min}$ of the $4\times4$ array. The three horizontal lines are, from
the top, the frequently quoted $(MN)^{2}\Gamma=256\Gamma$ and the Dicke limits
$MN\Gamma=16\Gamma$ and $N\Gamma=4\Gamma$. Square markers at the design spacing
$d=0.3\lambda_0$ give the timed Dicke projections
$\mathbf{c}^{\dagger}\bm{\Gamma}\mathbf{c}=2.08\Gamma$ and $1.01\Gamma$, which
are the rates the states actually decay at. The $(MN)^{2}\Gamma$ level is never
approached.}
\label{fig:super}
\end{figure}

\begin{figure}[!t]\centering
\includegraphics[width=0.95\columnwidth]{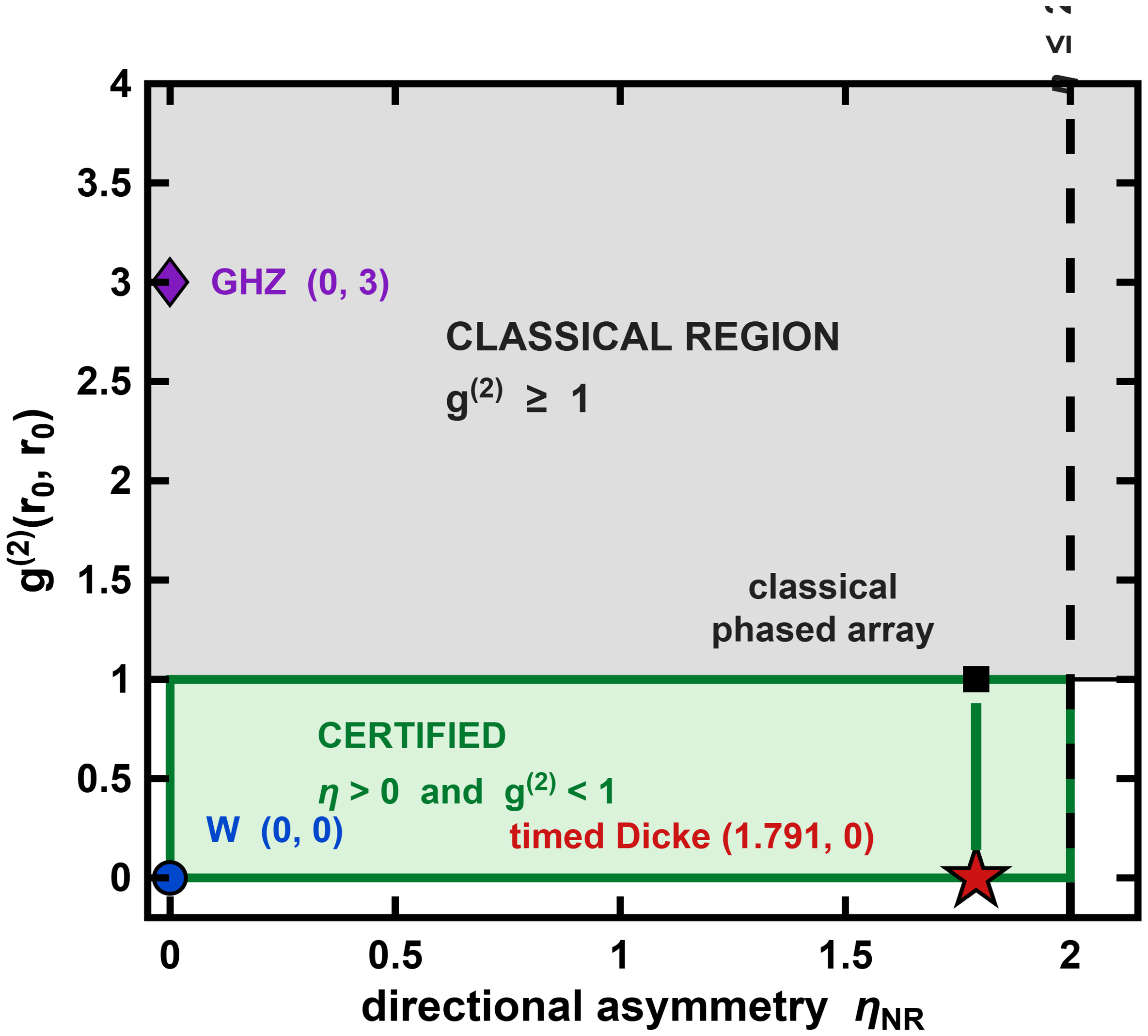}
\caption{The Nonclassical Directional-Emission
Certificate~\eqref{eq:cert} in the $(\eNR,g^{(2)})$ plane. Horizontal position
alone carries no quantum content --- the classical phased array reaches the same
$\eNR$; it is the vertical coordinate that cannot be reached classically. The classical region $g^{(2)}\geq1$ is shaded. A
classical phased array sits at $(1.791,\,\geq\!1)$; the timed Dicke state sits at
$(1.791,\,0)$, inside the certified quantum region. GHZ sits at $(0,\,3)$.}
\label{fig:cert}
\end{figure}

\subsection{Direction-of-Arrival Estimation}
Three levels of realism are reported separately and should not be conflated.
\emph{(a)~Exact-state}: $\mathbf{R}_Q$ is the analytic
model~\eqref{eq:srcmodel}, which bounds what the estimator can do in principle.
\emph{(b)~Finite-shot}: $\widehat{\mathbf{R}}_Q$ is formed from $N_s$ sampled
realisations, which introduces estimation error.
\emph{(c)~Finite-shot with an SNR floor}: a background term
$\sigma_n^2\mathbf{I}$ is added before sampling. Detector dark counts, timing
jitter and imperfect state preparation lie outside all three.

\textbf{Exact-state recovery:}
To first assess the intrinsic performance of the proposed Q-MUSIC
formulation, we consider two sources located at the off-grid directions
$(30.37^\circ,45.62^\circ)$ and $(59.48^\circ,119.31^\circ)$, with
$\sigma_n^2/p_k=10^{-9}$. When the pseudo-spectrum is evaluated on the
discrete angular grid, the estimated directions are
$(30.5^\circ,46.0^\circ)$ and $(59.5^\circ,119.0^\circ)$, resulting in an
RMSE of $0.36^\circ$. This error arises entirely from the finite angular
grid rather than from the Q-MUSIC formulation itself. To separate grid
quantisation from the intrinsic state-recovery capability, we subsequently
replace the discrete search with continuous Nelder--Mead refinement of
(34). The resulting estimates recover both source directions to
$3\times10^{-7}$ degrees. Accordingly, this noiseless result is reported
as numerical-precision recovery under the exact-state model, where the
residual error is determined by the optimiser tolerance rather than by a
physical measurement limit. The deliberately off-grid source locations
ensure that this distinction is not obscured by grid alignment.

\textbf{Finite-shot behaviour:}
The exact-state result above represents an idealised noiseless limit and
therefore does not quantify the statistical error associated with a finite
number of measurements. To evaluate this effect, we construct the
finite-shot estimator $\hat{R}_Q$ from $N_s$ complex Gaussian realisations
with covariance (33) and perform 100 independent trials for each value of
$N_s$. As summarised in Table~\ref{tab:shots}, the angular RMSE decreases approximately
as $N_s^{-1/2}$, consistent with the expected statistical scaling of a
covariance-based estimator.

\begin{table}[ht!]
\renewcommand{\arraystretch}{1.15}
\caption{Q-MUSIC RMSE versus Number of Shots (100 Trials per Point)}
\label{tab:shots}
\centering
\begin{tabular}{cccc}
\toprule
$N_s$ & RMSE$_\theta$ (deg) & RMSE$_\phi$ (deg) & scaling\\
\midrule
$10$   & $1.5\times10^{-3}$ & $1.8\times10^{-3}$ & ---\\
$50$   & $7\times10^{-4}$   & $7\times10^{-4}$   & $0.48$\\
$100$  & $5\times10^{-4}$   & $5\times10^{-4}$   & $0.49$\\
$500$  & $2\times10^{-4}$   & $2\times10^{-4}$   & $0.50$\\
$1000$ & $2\times10^{-4}$   & $2\times10^{-4}$   & $0.50$\\
$5000$ & $1\times10^{-4}$   & $1\times10^{-4}$   & $0.50$\\
\midrule
\multicolumn{4}{@{}p{0.94\columnwidth}@{}}{Continuous peak refinement; the
final column is the fitted exponent of $N_s^{-\alpha}$, confirming
$\alpha\simeq1/2$.}\\
\bottomrule
\end{tabular}
\end{table}

\begin{table}[ht!]
\renewcommand{\arraystretch}{1.15}
\caption{Q-MUSIC RMSE versus SNR ($N_s=100$, 200 Trials per Point)}
\label{tab:snr}
\centering
\begin{tabular}{ccc}
\toprule
SNR (dB) & RMSE$_\theta$ (deg) & RMSE$_\phi$ (deg)\\
\midrule
$0$  & $0.726$ & $0.706$\\
$10$ & $0.214$ & $0.224$\\
$20$ & $0.070$ & $0.067$\\
$30$ & $0.022$ & $0.023$\\
\bottomrule
\end{tabular}
\end{table}

\textbf{Contaminated covariance:}
The finite-shot analysis above assumes that the underlying covariance
matrix $R_Q$ is free from an additional background contribution. To
evaluate the robustness of Q-MUSIC when the measured coherence matrix is
contaminated by such a background floor, we consider the measured
covariance
$R_{\mathrm{meas}}=R_Q+\sigma_n^2 I$. The corresponding signal-to-noise
ratio is defined as
$\mathrm{SNR}=10\log_{10}(p_k/\sigma_n^2)$, while the number of shots is
fixed at $N_s=100$. The resulting RMSE values, obtained over 200 trials
for each SNR, are reported in Table~\ref{tab:snr}. The angular estimation error
decreases approximately by a factor of ten for every 20-dB increase in
SNR and remains below $1^\circ$ even at $0$~dB.
The angular separation of $46.5^\circ$ comfortably exceeds the resolution
limit set by the $13.2^\circ$ broadside beamwidth. The pseudo-spectrum is shown
in Fig.~\ref{fig:qmusic}.

\begin{figure}[!t]\centering
\includegraphics[width=\columnwidth]{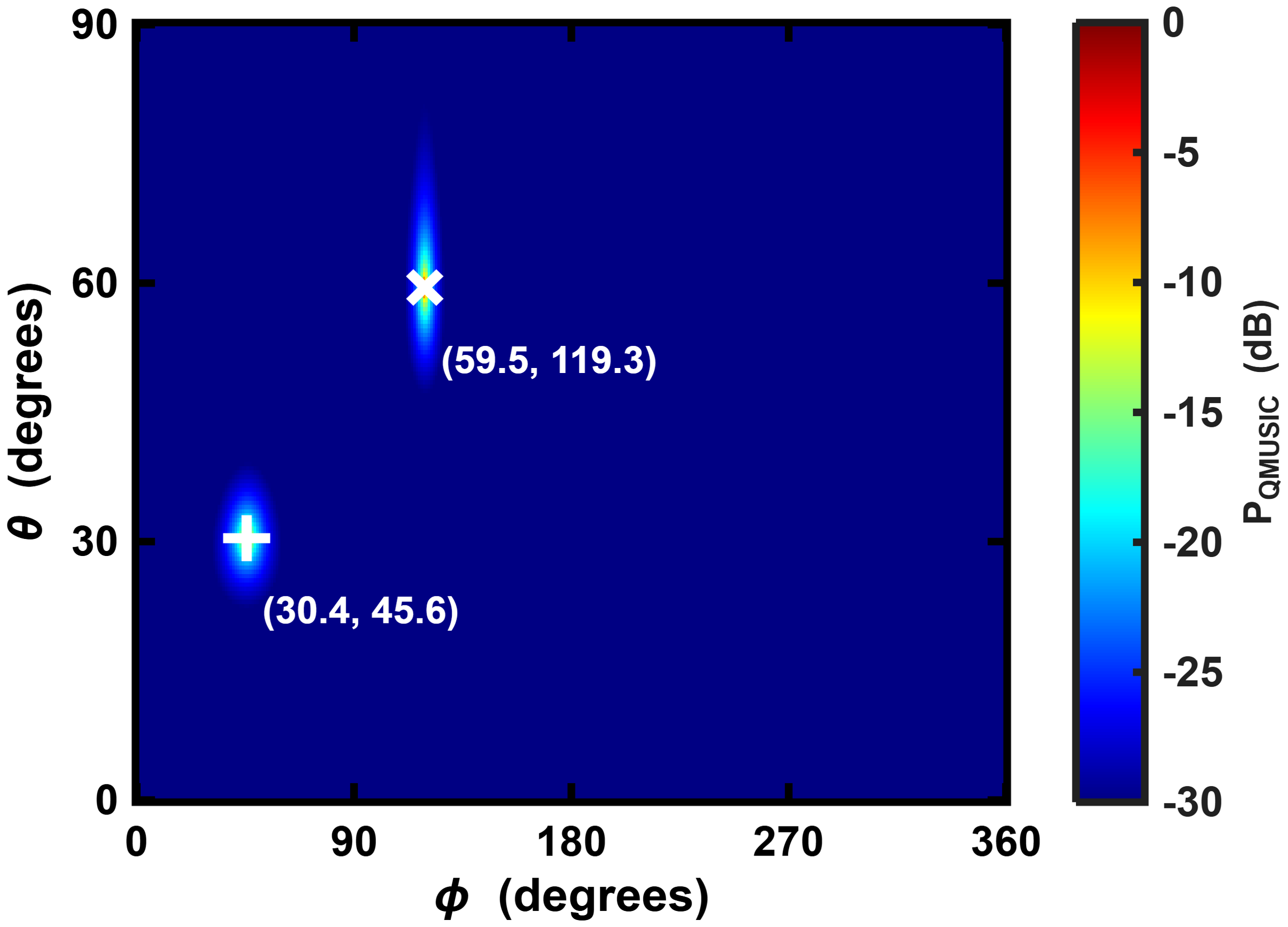}
\caption{Q-MUSIC pseudo-spectrum~\eqref{eq:qmusic} in dB over the upper
hemisphere, computed from the exact-state coherence
matrix~\eqref{eq:srcmodel}. Markers indicate the two off-grid sources at
$(30.37^\circ,45.62^\circ)$ and $(59.48^\circ,119.31^\circ)$; both are
recovered to numerical precision under continuous peak refinement.}
\label{fig:qmusic}
\end{figure}

\subsection{Decoherence}
Pure dephasing at rate $\gamma_\phi$ adds the term
$\gamma_\phi\sum_j(\hat{\sigma}^z_j\rho\hat{\sigma}^z_j-\rho)$
to~\eqref{eq:lindblad}, damping off-diagonal coherences as $e^{-2\gamma_\phi t}$
while leaving populations intact. Since the timed Dicke phase gradient lives
entirely in those coherences, $\eNR^{2\mathrm{D}}$ decays monotonically
(Table~\ref{tab:deph}). The design retains $89\%$ of its asymmetry at $\gamma_\phi/\Gamma=0.1$ and
$74\%$ at $0.25$, but only $22\%$ at $\gamma_\phi/\Gamma=1$. Operation therefore requires
$\gamma_\phi/\Gamma\lesssim0.25$, which for InGaAs implies cryogenic
temperatures~\cite{Lodahl}; the effect is not a numerical artefact but the
physical erasure of the state's phase information. The spacing, by contrast, is
largely uncritical: Fig.~\ref{fig:spacing} shows $\eNR$ varies by under $3\%$
over $0.28$--$0.34\lambda_0$ for both arrays, so fabrication tolerance is not the
dominant concern. In the present model dephasing is the dominant \emph{modelled}
limitation; experimentally, nonradiative recombination, spectral diffusion,
phonon interactions, substrate effects and detector background must also be
accounted for.

\begin{table}[!t]
\renewcommand{\arraystretch}{1.18}
\caption{Dephasing Sensitivity of the TD-2D State}
\label{tab:deph}
\centering
\begin{tabular}{ccc}
\toprule
$\gamma_\phi/\Gamma$ & $\eNR^{2\mathrm{D}}$ & Retention\\
\midrule
$0$    & $1.867$ & $100.0\%$\\
$0.05$ & $1.764$ & $94.5\%$\\
$0.10$ & $1.663$ & $89.1\%$\\
$0.25$ & $1.373$ & $73.5\%$\\
$0.50$ & $0.956$ & $51.2\%$\\
$1.00$ & $0.411$ & $22.0\%$\\
$2.00$ & $0.061$ & $3.3\%$\\
$5.00$ & $0.000$ & $0.0\%$\\
\bottomrule
\end{tabular}
\end{table}

\begin{figure}[!t]\centering
\includegraphics[width=\columnwidth]{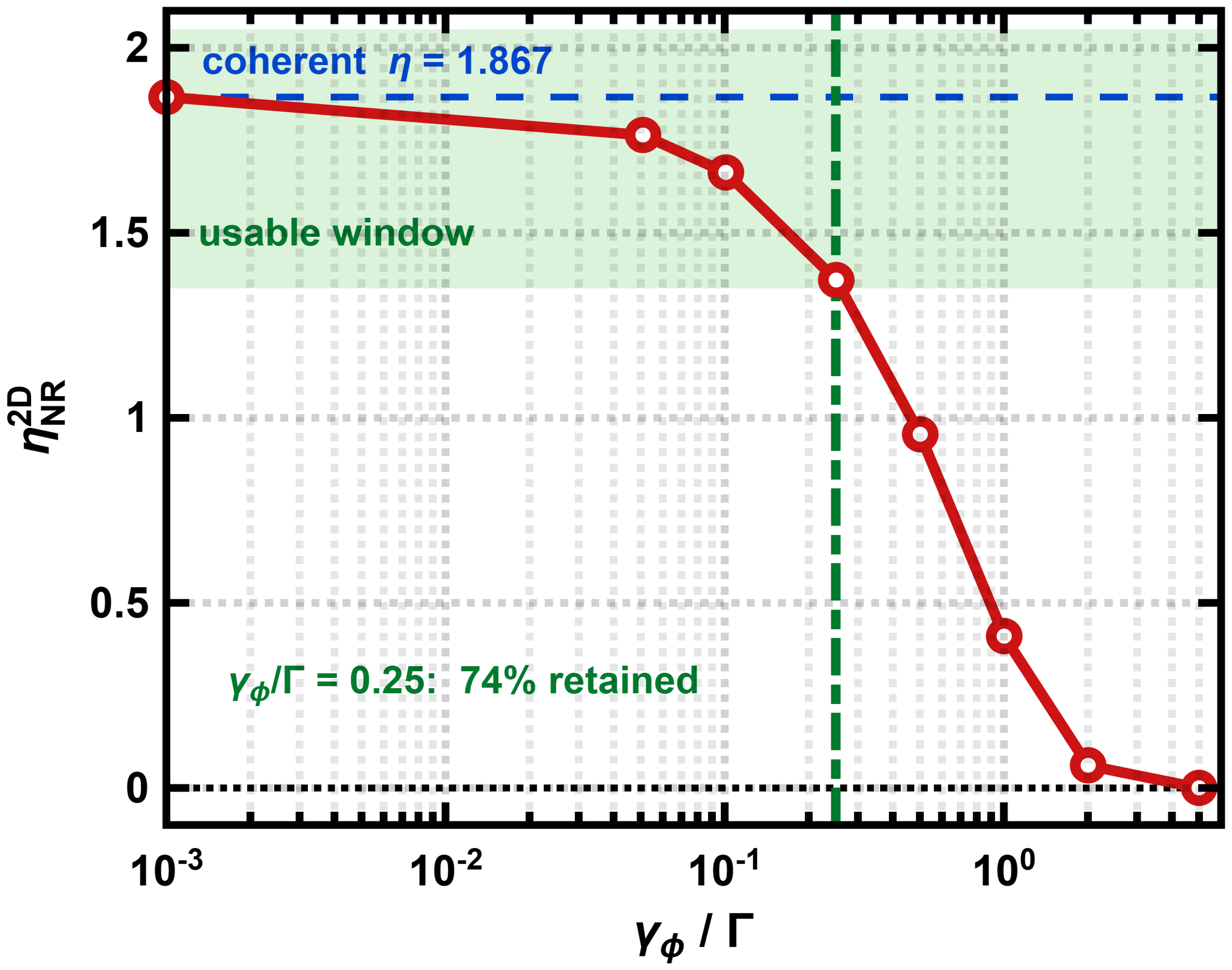}
\caption{Directional asymmetry versus pure dephasing. The dashed line marks the
coherent value $\eNR^{2\mathrm{D}}=1.867$; the dotted line the classical limit.
The usable window is $\gamma_\phi/\Gamma\lesssim0.25$.}
\label{fig:deph}
\end{figure}

\begin{figure}[!t]\centering
\includegraphics[width=\columnwidth]{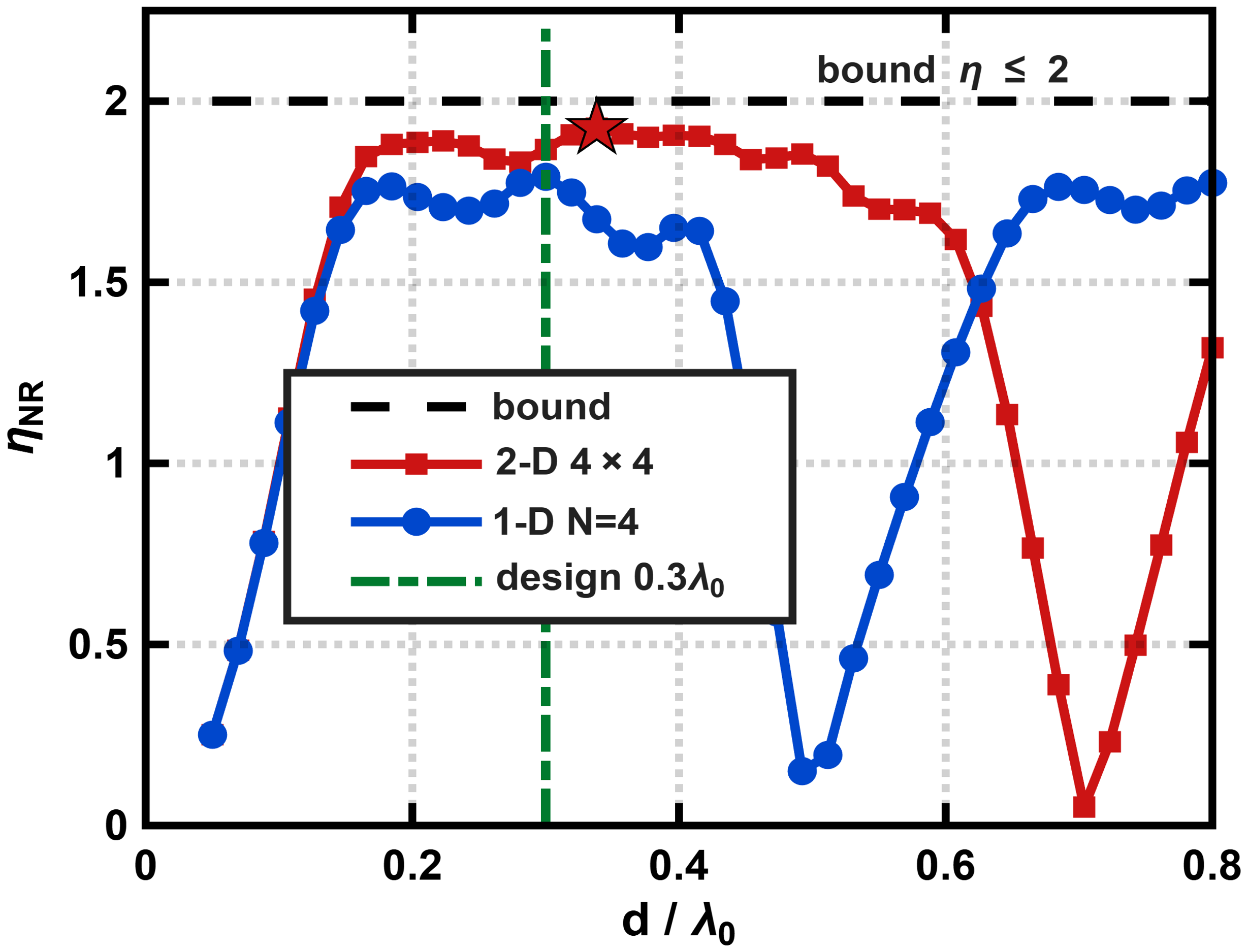}
\caption{Directional asymmetry versus inter-element spacing for both arrays, plotted
against the absolute ceiling $\eNR=2$ of Theorem~\ref{thm:bound}. Stars mark the
optima ($0.300\lambda_0$ in 1-D, $0.338\lambda_0$ in 2-D). The 1-D optimum coincides
with the design point; the 2-D optimum would give $\eNR^{2\mathrm{D}}=1.926$,
only $3\%$ above the design value.}
\label{fig:spacing}
\end{figure}

\subsection{Photon Statistics and the Certificate}
Table~\ref{tab:g2} reports $\Gone$ and $g^{(2)}$ computed in the full $2^{N}$
Hilbert space. The timed Dicke and W states give $\Gtwo=0$ identically, so $g^{(2)}=0$
wherever $\Gone>0$: a one-excitation state cannot produce a coincidence. At
exact pattern nulls the normalised $g^{(2)}$ is undefined ($0/0$). Combined with
$\eNR=1.791$, the certificate~\eqref{eq:cert} is satisfied, and no classical
phased array can occupy this point in the $(\eNR,g^{(2)})$ plane
(Fig.~\ref{fig:cert}). The GHZ state gives $\Gone=N/2=2.0$ independent of angle
and $g^{(2)}=3.0$, exceeding even the thermal value of $2$. Only the timed Dicke state passes. The W state is antibunched but symmetric, the
classical array is asymmetric but never antibunched, and GHZ is neither. The
certificate, therefore, separates precisely the intended physics.

\begin{table}[!t]
\renewcommand{\arraystretch}{1.18}
\caption{Photon Statistics, Full $2^{N}$ Hilbert Space, $N=4$}
\label{tab:g2}
\centering
\setlength{\tabcolsep}{4pt}
\begin{tabular}{lcccl}
\toprule
State & $\Gone$ shape & $g^{(2)}(\hat{\mathbf{r}}_0)$
& $\eNR$ & Certificate\\
\midrule
Classical phased & $|AF|^2$ & $\geq1$ & $1.791$ & fails$^{a}$\\
Coherent state   & $|AF|^2$ & $=1$    & $1.791$ & fails\\
Thermal field    & $|AF|^2$ & $=2$    & $1.791$ & fails\\
W (symmetric)    & $|AF|^2/N$ & $0.000$ & $0.000$ & fails$^{b}$\\
Timed Dicke      & directional & $\mathbf{0.000}$ & $\mathbf{1.791}$ & \textbf{passes}\\
GHZ              & isotropic, $N/2$ & $4(N{-}1)/N$ & $0.000$ & fails\\
\midrule
\multicolumn{5}{@{}p{0.94\columnwidth}@{}}{$^{a}$asymmetry without antibunching;
 $^{b}$antibunching without asymmetry. Only their conjunction certifies.}\\
\bottomrule
\end{tabular}
\end{table}

\subsection{Comparison with Related Work}
Table~\ref{tab:comp} positions this work. To the best of our knowledge it is the first to
combine a planar QD geometry, a full-sphere asymmetry metric, and a QFI
bound within one framework, and to report the saturation law~\eqref{eq:ratio}.

\begin{table}[!t]
\renewcommand{\arraystretch}{1.18}
\caption{Comparison with Related Quantum Array Studies}
\label{tab:comp}
\centering
\footnotesize
\begin{tabular}{lcccccc}
\toprule
\textbf{Work} & \textbf{2-D} & \textbf{$d$--$d$} & \textbf{$g^{(2)}$} &
\textbf{Bound} & \textbf{DoA} & \textbf{QCRB}\\
\midrule
Slepyan--Boag \cite{SlepyanBoag} & --- & \checkmark & --- & --- & --- & ---\\
Mikhalychev \cite{Mikhalychev}   & --- & \checkmark & \checkmark & --- & --- & ---\\
Masson \cite{Masson}             & \checkmark & \checkmark & --- & --- & --- & ---\\
Kim \textit{et al.} \cite{Kim}   & --- & --- & --- & --- & \checkmark & \checkmark\\
Xu \textit{et al.} \cite{Xu}     & \checkmark & --- & --- & --- & \checkmark & ---\\
Banerjee \cite{Banerjee}         & \checkmark & --- & --- & --- & \checkmark & \checkmark\\
\textbf{This work}
 & \checkmark & \checkmark & \checkmark & \checkmark & \checkmark & \checkmark\\
\bottomrule
\midrule
\multicolumn{7}{@{}p{0.94\columnwidth}@{}}{\footnotesize
2-D: planar geometry; $d$--$d$: dipole--dipole coupling with collective Lamb
shift; $g^{(2)}$: photon-statistics analysis; Bound: an absolute limit on the
asymmetry metric; DoA: direction estimation; QCRB: Fisher-information bound.}\\
\bottomrule
\end{tabular}
\end{table}

\section{Robustness, Discussion and Limitations}\label{sec:VIII-B}
\label{sec:limits}

\subsection{Numerical Convergence and Uncertainty}
The full-sphere integral~\eqref{eq:eta2d} converges rapidly: on grids
$(N_\theta,N_\phi)$ of $(45,90)$, $(91,180)$, $(181,360)$, $(361,720)$ and
$(721,1440)$ the value is $1.8670$, $1.8670$, $1.8669$, $1.8669$, $1.8669$.
We therefore quote $\eNR^{2\mathrm{D}}=1.867\pm0.001$, the uncertainty being
the grid residual rather than a physical spread.

\subsection{Vectorial Element Pattern}
\label{sec:vector}
All results reported up to this point use the isotropic scalar element model as
a \emph{baseline}; this subsection adds the vectorial dipole pattern as an
\emph{extension} and reports both, so the reader can see which conclusions
depend on the element model and which do not. We stress at the outset that the
vectorial element factor is \emph{not} a full electromagnetic model: it
corrects the free-space element pattern only, and does not account for a
semiconductor substrate, near-field mutual coupling beyond the dyadic terms
already in~\eqref{eq:Gij}--\eqref{eq:Jij}, or any dielectric environment. Those
require a substrate Green function and remain future work.\\
The baseline treats each QD as a point emitter with an isotropic scalar
radiation factor. A real electric dipole of unit orientation $\hat{\mathbf{d}}$
has the far-field element factor
\begin{equation}
E(\hat{\mathbf{r}})=1-\bigl|\hat{\mathbf{d}}\cdot\hat{\mathbf{r}}\bigr|^{2},
\label{eq:elem}
\end{equation}
equal to $\sin^2\theta$ for $\hat{\mathbf{d}}=\hat{z}$, so that the full pattern
is $E(\hat{\mathbf{r}})\,\Gone(\hat{\mathbf{r}})$. Because
$|\hat{\mathbf{d}}\cdot\hat{\mathbf{r}}|^{2}$ is invariant under
$\hat{\mathbf{r}}\rightarrow-\hat{\mathbf{r}}$, we have the following.

\begin{lemma}
\label{lem:elem}
The vectorial element factor~\eqref{eq:elem} is inversion symmetric, so it
cannot by itself generate directional asymmetry: for any state,
$\eNR=0$ under the scalar model implies $\eNR=0$ under the vectorial model.
\end{lemma}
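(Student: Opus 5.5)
The plan is to show that multiplying by the element factor preserves pointwise mirror symmetry, and then let the definition of $\eNR$ turn pointwise symmetry into a zero integral. The first step is to fix the mirror. In 3-D the physical spatial reversal is the antipodal map $\mathcal{M}:\hat{\mathbf{r}}\mapsto-\hat{\mathbf{r}}$. For the planar array this is exactly $\phi\rightarrow\phi+\pi$, with the pattern depending only on $(u,v)$; for the 1-D array it is $\theta\rightarrow\pi-\theta$, which flips $\cos\theta$, the only direction cosine entering $\mathbf{a}(\hat{\mathbf{r}})$. In both cases $\Gone\circ\mathcal{M}$ coincides with $\Gone$ evaluated at $-\hat{\mathbf{r}}$, so it suffices to work with $\hat{\mathbf{r}}\mapsto-\hat{\mathbf{r}}$.

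The second step is the invariance of the element factor. From \eqref{eq:elem} we have $E(-\hat{\mathbf{r}})=1-|\hat{\mathbf{d}}\cdot(-\hat{\mathbf{r}})|^2=E(\hat{\mathbf{r}})$ for every orientation $\hat{\mathbf{d}}$. Since $0\le E\le1$, the vectorial pattern $E\,\Gone$ is nonnegative.

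The third step is the implication itself. Suppose $\eNR=0$ under the scalar model. The numerator of \eqref{eq:eta1d} or \eqref{eq:eta2d} is then $\int|\Gone-\Gone\circ\mathcal{M}|\,d\Omega=0$. Because $\Gone=\mathbf{a}^{\mathsf H}\mathbf{C}\mathbf{a}$ is a trigonometric polynomial in $\hat{\mathbf{r}}$, it is continuous, so the integrand vanishes everywhere and not merely almost everywhere. Hence $\Gone(\hat{\mathbf{r}})=\Gone(-\hat{\mathbf{r}})$ pointwise. The vectorial numerator is
\[
\int E(\hat{\mathbf{r}})\,\bigl|\Gone(\hat{\mathbf{r}})-\Gone(-\hat{\mathbf{r}})\bigr|\,d\Omega ,
\]
obtained by using $E(-\hat{\mathbf{r}})=E(\hat{\mathbf{r}})$ to factor $E$ out. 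This integral is identically zero, so the vectorial $\eNR$ is $0$ as well.

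One check remains: the vectorial denominator $\int E\,\Gone\,d\Omega$ must be positive, otherwise the metric is undefined. For any nonzero state, $\Gone$ is a nonzero nonnegative continuous function and $E$ vanishes only at the two poles $\pm\hat{\mathbf{d}}$, so the product cannot integrate to zero. I would also note the stronger statement this argument gives. Because $E\ge0$ and $E\le1$, the vectorial numerator is at most the scalar numerator, so $E$ can only reweight existing asymmetry and cannot create it. This is the precise sense of ``cannot by itself generate.''

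The main obstacle is the passage from a zero integral to pointwise symmetry. For the 1-D metric the $\sin\theta$ weight vanishes only at the endpoints, so continuity suffices there. For the planar metric I need the mirror used in \eqref{eq:eta2d} to be exactly the antipodal map on the sphere, so that $E$, which is invariant under $\hat{\mathbf{r}}\to-\hat{\mathbf{r}}$ but not under $\theta\to\pi-\theta$ alone, is evaluated at the correct partner point. I would handle this by first writing $(\theta,\phi+\pi)$ in Cartesian coordinates and checking that it gives $(-u,-v,\cos\theta)$ rather than $-\hat{\mathbf{r}}$. I would then observe that $\Gone$ is independent of the $z$-component, so the planar metric agrees with the antipodal one. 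Finally I would confirm that $E$ is also invariant under $(\theta,\phi)\to(\theta,\phi+\pi)$; this holds when $\hat{\mathbf{d}}=\hat z$, and for a general in-plane $\hat{\mathbf{d}}$ it must be checked directly.
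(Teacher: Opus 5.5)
Your route is the paper's. Its entire justification is your second step, that $|\hat{\mathbf{d}}\cdot\hat{\mathbf{r}}|^{2}$ is invariant under $\hat{\mathbf{r}}\rightarrow-\hat{\mathbf{r}}$, so that $E$ can be factored out of the numerator. That argument is complete when the mirror is the antipodal map. However, the check you deferred at the end is not a formality. Your first step's reduction (``it suffices to work with $\hat{\mathbf{r}}\mapsto-\hat{\mathbf{r}}$'') is valid for $\Gone$ only, not for $E\,\Gone$. Write $\hat{\mathbf{r}}=(u,v,w)$ with $w=\cos\theta$. The planar partner used in \eqref{eq:eta2d} is $(-u,-v,w)$, which is inversion composed with $w\rightarrow-w$. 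Direct expansion then gives
\[
E(\theta,\phi+\pi)-E(\theta,\phi)=4\,d_z\cos\theta\,\bigl(d_x u+d_y v\bigr).
\]
So the check succeeds exactly when $d_z=0$ or $\hat{\mathbf{d}}=\pm\hat{z}$. The in-plane case you left open takes one line: $\hat{\mathbf{d}}\cdot(-u,-v,w)=-\hat{\mathbf{d}}\cdot\hat{\mathbf{r}}$. In both admissible cases $E$ is an even function of $(u,v)$ alone: $E=u^2+v^2$ for $\hat{z}$, and $E=1-(d_xu+d_yv)^2$ in plane. Hence $E\,\Gone$ still depends only on the direction cosines, and $\phi\rightarrow\phi+\pi$ still realises $\hat{\mathbf{r}}\rightarrow-\hat{\mathbf{r}}$ on it.

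For a tilted dipole the check fails, and with the literal metric \eqref{eq:eta2d} the implication itself is false. Take the W state: its $\Gone$ is even in $(u,v)$, so the scalar $\eNR^{2\mathrm{D}}=0$. Yet with $\hat{\mathbf{d}}=(\hat{x}+\hat{z})/\sqrt{2}$ the vectorial numerator equals $\int 2\,|u\,w|\,\Gone\,d\Omega>0$. To close the proof you must therefore do one of two things. Either restrict the statement to dipoles normal to, or lying in, the array plane (all three rows of Table~\ref{tab:vector} qualify). Or define the vectorial metric with the true antipodal partner $(\pi-\theta,\phi+\pi)$; then your third step holds for every $\hat{\mathbf{d}}$ with nothing further to check.

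The same caveat applies to your 1-D case, which you passed over. There $\theta\rightarrow\pi-\theta$ at fixed azimuth about the array axis is a reflection through the plane normal to the array, not inversion. It acts like inversion on $\Gone$, which depends only on the axial direction cosine. But $E$ is invariant under it only for $\hat{\mathbf{d}}$ parallel or perpendicular to that axis; your continuity remark addresses a different point.

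The paper's one-line argument silently identifies $\phi\rightarrow\phi+\pi$ with inversion. By flagging this, your proposal is already more careful than the source. It just needs to finish the computation and state the resulting restriction, or the corrected mirror.
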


The element pattern therefore reweights $\Gone$ but leaves the mirror structure
of Sec.~\ref{sec:eta2d} intact, and both Theorem~\ref{thm:bound} and
certificate~\eqref{eq:cert} carry over unchanged. Only the numerical values
shift. Table~\ref{tab:vector} quantifies this for three dipole orientations. For dipoles normal to the array plane --- the natural orientation for a
self-assembled InGaAs layer --- the asymmetry \emph{increases} slightly to
$1.885$ and the directivity gains $1.14$~dB, because $\sin^2\theta$ suppresses
the polar directions where the timed Dicke pattern carries little power.

\begin{table}[ht!]
\renewcommand{\arraystretch}{1.15}
\caption{Effect of the Vectorial Element Pattern, $4\times4$ TD-2D}
\label{tab:vector}
\centering
\setlength{\tabcolsep}{4pt}
\begin{tabular}{lcccc}
\toprule
Element model & $\eNR^{2\mathrm{D}}$ & $D_0$ (dBi) & peak $(\theta,\phi)$ & F/B (dB)\\
\midrule
isotropic scalar          & $1.867$ & $9.49$  & $(90,45)$  & $27.2$\\
$\hat{z}$ (out of plane)  & $1.885$ & $10.63$ & $(90,45)$  & $27.2$\\
$\hat{x}$ (in plane)      & $1.856$ & $9.55$  & $(117,53)$ & $24.4$\\
in-plane diagonal         & $1.659$ & $9.21$  & $(136,45)$ & $20.9$\\
\midrule
\multicolumn{5}{@{}p{0.94\columnwidth}@{}}{W-state control: $\eNR=1.0\times10^{-15}$
for $\hat{\mathbf{d}}=\hat{z}$, confirming Lemma~\ref{lem:elem} numerically.}\\
\bottomrule
\end{tabular}
\end{table}

In-plane dipoles are mildly worse and tilt the beam, since the element null then
lies closer to the intended beam direction; the worst case examined, an in-plane
diagonal dipole whose null is nearest the beam, still gives $\eNR=1.659$,
i.e.\ $83\%$ of the bound. The central results are therefore robust to the
element model, and are not artefacts of the scalar approximation.
Fig.~\ref{fig:vector} shows the corresponding elevation cuts.

\begin{figure}[!t]\centering
\includegraphics[width=\columnwidth]{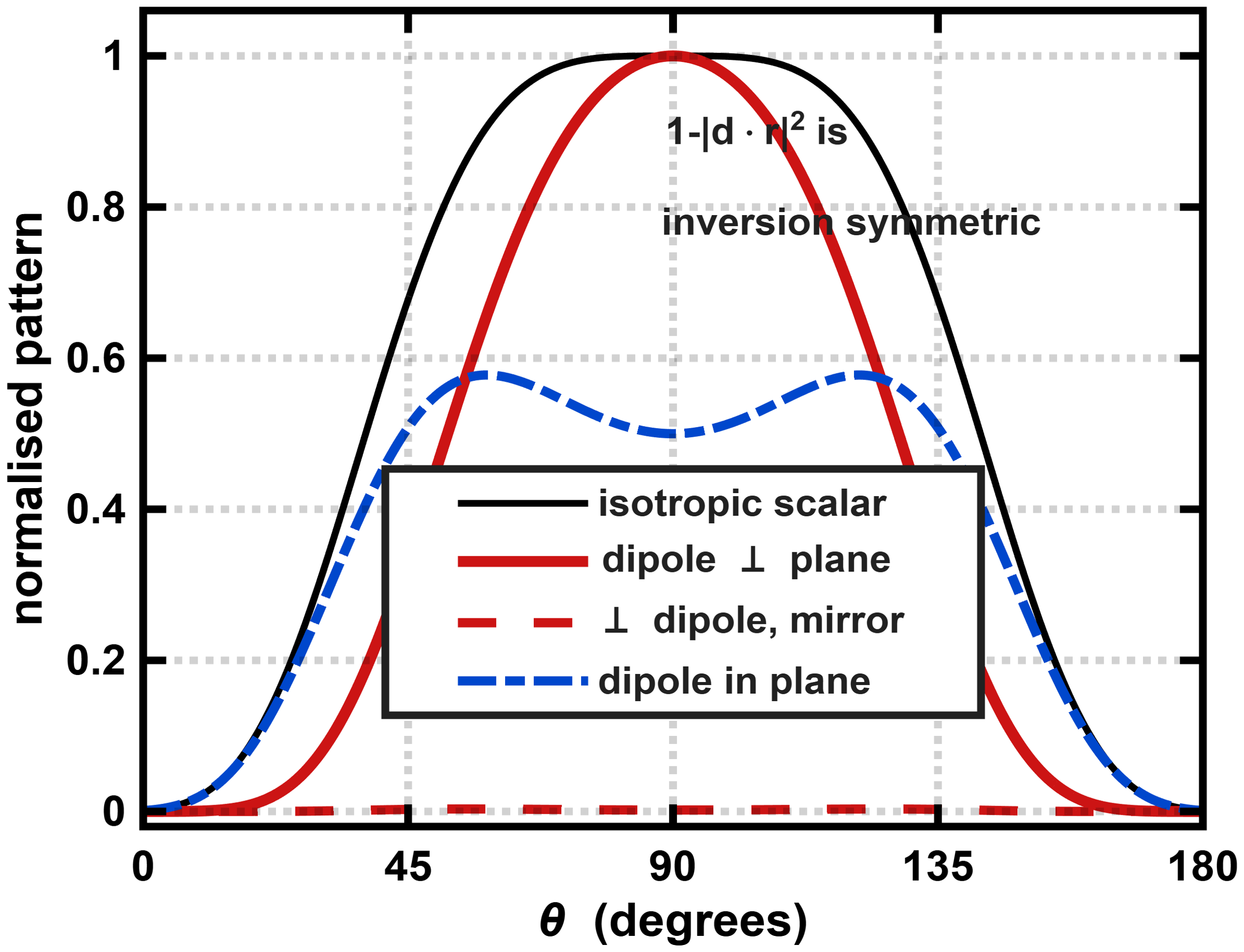}
\caption{Elevation cuts at $\phi=45^\circ$ for the TD-2D state under the scalar
element model and two dipole orientations. The element factor~\eqref{eq:elem}
reweights the pattern but, being inversion symmetric
(Lemma~\ref{lem:elem}), cannot create asymmetry; the mirror curve is shown
dashed for the $\hat{z}$ case.}
\label{fig:vector}
\end{figure}

\subsection{Robustness of the Null}
The front-to-back ratio is a \emph{point} quantity and must be read as such.
Sweeping the observation direction about the exact mirror by
$\Delta\phi=0.1^\circ,0.5^\circ,1^\circ,2^\circ,5^\circ$ changes the mirrored
intensity by less than $12\%$ and the ratio stays within $0.5$~dB of
$27.2$~dB, so the null is angularly robust at the resolution of any realistic
measurement. It is, however, sensitive to the array being exactly periodic;
see below.

\subsection{Phase, Position and Spectral Disorder}
Because the timed Dicke state is defined by phase relationships, phase noise is
the dominant fabrication concern. We add independent Gaussian phase errors
$c_j\rightarrow c_je^{i\delta\phi_j}$ of standard deviation $\sigma_\phi$, and
independent Gaussian positional displacements of standard deviation
$\sigma_r$, and report the mean and standard deviation of $\eNR^{2\mathrm{D}}$
over $N_{\mathrm{MC}}=200$ realisations at each level
(Table~\ref{tab:disorder}).

\begin{table}[ht!]
\renewcommand{\arraystretch}{1.15}
\caption{Disorder Sensitivity of $\eNR^{2\mathrm{D}}$
($N_{\mathrm{MC}}=200$, mean $\pm$ std)}
\label{tab:disorder}
\centering
\begin{tabular}{ccc|ccc}
\toprule
$\sigma_\phi$ & $\eNR^{2\mathrm{D}}$ & ret. &
$\sigma_r/\lambda_0$ & $\eNR^{2\mathrm{D}}$ & ret.\\
(deg) & & (\%) & & & (\%)\\
\midrule
$0$  & $1.867\pm0.000$ & $100$ & $0.00$ & $1.867\pm0.000$ & $100$\\
$5$  & $1.864\pm0.018$ & $99.8$& $0.01$ & $1.860\pm0.018$ & $99.6$\\
$10$ & $1.846\pm0.036$ & $98.9$& $0.02$ & $1.838\pm0.039$ & $98.4$\\
$20$ & $1.787\pm0.070$ & $95.7$& $0.05$ & $1.719\pm0.097$ & $92.1$\\
$30$ & $1.693\pm0.106$ & $90.7$& $0.10$ & $1.517\pm0.152$ & $81.3$\\
$45$ & $1.452\pm0.198$ & $77.8$& ---    & ---             & ---\\
\bottomrule
\end{tabular}
\end{table}

The two mechanisms are not equivalent, and the difference is informative.
A phase error $\delta\phi_j$ enters the amplitude directly and is unbounded in
its effect; a positional error $\delta\mathbf{r}_j$ enters only through
$k_0\hat{\mathbf{r}}\cdot\delta\mathbf{r}_j$, i.e.\ it is a phase error
\emph{projected} onto the observation direction and therefore weighted by
$|\cos\angle(\hat{\mathbf{r}},\delta\mathbf{r}_j)|\leq1$. A displacement of
$\sigma_r=0.05\lambda_0$ thus produces an effective phase spread of at most
$2\pi(0.05)=18^\circ$, and the measured retentions agree: $92\%$ at
$\sigma_r=0.05\lambda_0$ against $96\%$ at $\sigma_\phi=20^\circ$. The practical
reading is that $\eNR$ is limited by the fidelity of the \emph{phase profile},
not by lattice regularity, so fabrication tolerance on element placement is the
less demanding of the two requirements. Dephasing, by contrast, degrades the
asymmetry far more sharply (Fig.~\ref{fig:deph}), and is the binding
constraint. The spread grows
with the disorder level, so the mean alone understates the risk at large
$\sigma$. Table~\ref{tab:limits}
summarises what the present model does and does not include.

\begin{table}[!t]
\renewcommand{\arraystretch}{1.15}
\caption{Scope of the Present Model}
\label{tab:limits}
\centering
\footnotesize
\setlength{\tabcolsep}{4pt}
\begin{tabular}{@{}p{1.55cm}p{2.55cm}p{3.05cm}@{}}
\toprule
\textbf{Aspect} & \textbf{Present model} & \textbf{Future work}\\
\midrule
QD confinement   & 1-D infinite well & 3-D excitonic QD\\
Element pattern  & scalar baseline $+$ vectorial dipole & full dyadic Green function\\
Environment      & homogeneous background & substrate Green function\\
Positional order & Gaussian disorder & correlated disorder\\
Spectral spread  & identical $\omega_0$ & $\omega_j=\omega_0+\Delta_j$\\
Dephasing        & included, Markovian & full phonon model\\
Nonradiative loss& absent & finite $\Gamma_{\mathrm{nr}}$\\
Q-MUSIC input    & exact $\rho$ and finite-shot $\widehat{\mathbf{R}}_Q$ & detector-level model\\
Detection        & ideal, with SNR floor & dark counts, timing jitter\\
QFI              & scalar $\theta$; $\FQM$ evaluated & jointly attainable measurement\\
\bottomrule
\end{tabular}
\end{table}

\subsection{Discussion}
The present work initially focuses on what the $\eNR$ literature actually establishes. The proposed Lemma shows that
pattern asymmetry alone is classically reproducible. Our results suggest that the nonclassical aspect emphasized by Slepyan and
Boag~\cite{SlepyanBoag} is more appropriately associated with the single-photon
character of the excitation than with the first-order spatial pattern itself.
Certificate~\eqref{eq:cert} formalizes that association and makes it experimentally
testable with a Hanbury~Brown--Twiss setup; Sec.~\ref{sec:cert} gives the
proposed protocol. It also exemplifies the advantage of bounded precision. Equation~\eqref{eq:ratio} saturates at $6$.
Heisenberg $1/N$ DoA scaling should not be expected from passive apertures, whose Fisher information is fixed by aperture variance. Additionally, it demonstrates the directivity-asymmetry trade in a 2-D quantum dot array. The asymmetric beam is necessarily end-fire, costing $4.02$~dB of directivity relative to classical
broadside. No such penalty arises in 1-D. Insights on the modal scope can be obtained from \eqref{eq:Gij}--\eqref{eq:Jij}, which assume identical,
parallel, perpendicular dipoles and a homogeneous background. A semiconductor
substrate modifies the Green's function and hence both $\Gamma_{ij}$ and
$J_{ij}$; inhomogeneous broadening across a real QD ensemble would further
degrade $\eNR$. Both of these are beyond the scope of the present work and are left for future extension. The present work also identifies few numerical discrepancies in the prior literature.
For example, evaluating
$d_{ab}=16eL/9\pi^2$ at $L=15$~nm gives $4.33\times10^{-28}$~C$\cdot$m and
$\Gamma=5.46\times10^{6}$~s$^{-1}$. Tabulated values of $4.3\times10^{-29}$ and
$7.3\times10^{5}$ are mutually inconsistent, since the smaller dipole implies
$\Gamma\approx5.3\times10^{4}$. The first sidelobe of a uniform four-element ULA
is $-11.30$~dB, not $-7.43$~dB.


\section{Conclusion}
We have rigorously re-examined directional emission in QD antenna arrays and found the standard figure of merit insufficient. A classical equivalence lemma shows the timed Dicke $G^{(1)}$ coincides with a classical phased-array pattern to $4\times10^{-16}$, so $\eta_{\mathrm{NR}} = 1.791$ is not by itself a quantum signature; nor is pattern asymmetry a test of Lorentz reciprocity, which constrains the transmit--receive response instead. We proved the tight bound $\eta_{\mathrm{NR}} \in [0,2]$, attained to $89.5\%$ in 1-D and $93.3\%$ by a $4\times4$ planar array under the corrected mirror $\phi \to \phi+\pi$, both evaluated with the same solid-angle measure. The principal new result is a sufficient certificate for nonclassical directional emission: $\eta_{\mathrm{NR}} > 0$ together with $g^{(2)}(\hat{\mathbf{r}}_0,\hat{\mathbf{r}}_0) < 1$. Three reference states separate cleanly --- a classical phased array is asymmetric but never antibunched, the symmetric W state is antibunched but symmetric, and only the timed Dicke state combines both. Including dipole--dipole coupling, the timed Dicke decay rate is the projection $\mathbf{c}^\dagger \boldsymbol{\Gamma} \mathbf{c}$, equal to $1.01\Gamma$ and $2.08\Gamma$ rather than $N^2\Gamma$, with a radiatively subradiant mode at $0.001\Gamma$. Closed forms $G^{(1)}_{\mathrm{GHZ}} = N/2$ and $g^{(2)}_{\mathrm{GHZ}} = 4(N-1)/N$ were derived. For sensing, Quantum MUSIC localizes two sources on the quantum spatial coherence matrix, and the elevation QFI obeys $6(M-1)/(M+1)$ for the corner-pair probe --- which, we also showed, has $F_{\phi\phi} = 0$ and is blind in azimuth.\\
Two robustness results support the central claim. Replacing the scalar element factor by the vectorial dipole pattern (Sec.~\ref{sec:VIII-B}) $1 - |\hat{\mathbf{d}}\cdot\hat{\mathbf{r}}|^2$ leaves both Theorem~1 and the certificate intact, because that factor is inversion symmetric (Lemma~2) and so cannot generate asymmetry by itself; for dipoles normal to the array plane the asymmetry in fact rises to $\eta^{2D}_{\mathrm{NR}} = 1.885$ with $1.14$ dB more directivity, and the worst orientation examined still retains $83\%$ of the bound. Under a contaminated coherence matrix $\mathbf{R}_{\mathrm{meas}} = \mathbf{R}_Q + \sigma_n^2 \mathbf{I}$ with $N_s = 100$ shots, Q-MUSIC localizes both sources with RMSE falling from $0.73^\circ$ at $0$ dB SNR to $0.02^\circ$ at $30$ dB.\\
The corrected picture is more modest but sharper: the quantum state does not confer a pattern unavailable to classical feeds, but it delivers that pattern one photon at a time. The remaining work is numerical experiments and electromagnetic rather than conceptual: a substrate dyadic Green function, spectral and positional disorder beyond the Gaussian model used here, nonradiative loss, detector-level noise, and, ultimately, a Hanbury Brown--Twiss test of the certificate. With respect to epistemic status, every number in this paper is derived from theory and computed twice in independent platform implementations that agree to six significant figures. The certificate is offered as a criterion that an experiment \textit{could} apply, with a protocol given in Sec.~\ref{sec:IV-C}. Physical experimentation is left as a future scope of extension.
\bibliographystyle{IEEEtran}
\bibliography{refQuantum}

\end{document}